\newcommand{\pparagraph}[1]{\medskip \noindent {\bf {#1}}}
\newtheorem*{rep@theorem}{\rep@title}
\newcommand{\newreptheorem}[2]{
\newenvironment{rep#1}[1]{
 \def\rep@title{#2 \ref{##1}}
 \begin{rep@theorem}\itshape}
 {\end{rep@theorem}}}
\theoremstyle{plain}
\newcommand{\ignore}[1]{}
\def\colorful{0}
\newcommand{\violet}[1]{{\color{violet}{#1}}}
\newcommand{\blue}[1]{{{\color{blue}#1}}}
\newcommand{\violet}[1]{{{#1}}}
\newcommand{\blue}[1]{{{#1}}}
\newtheorem*{theorem*}{Theorem}
\newtheorem*{noclaim*}{Claim}
\newcommand{\uhr}{\upharpoonright}
\newcommand{\err}{\mathrm{err}}
\newcommand{\normal}{{\mathcal{N}(0,1)}}
\newcommand{\Gen}{\mathsf{Gen}}
\newcommand{\strip}{\mathrm{Strip}}
\newcommand{\inner}{\mathrm{Inner}}
\newcommand{\outter}{\mathrm{Outer}}
\renewcommand{\N}{\mathds{N}}
\renewcommand{\R}{\mathds{R}}
\newcommand{\unif}{\mathrm{unif}}
\newcommand{\pseudo}{\mathrm{pseudo}}
\newcommand{\hash}{\mathrm{hash}} 
\newcommand{\bucket}{\mathrm{bucket}}
\newcommand{\CNFLTF}{\textsc{CnfLtf}}
\newcommand{\CNF}{\mathrm{CNF}}
\begin{document}

\title{Fooling intersections of low-weight halfspaces}

\author{Rocco A.~Servedio\thanks{Supported by NSF grants CCF-1420349 and CCF-1563155. Email: {\tt rocco@cs.columbia.edu}} \\ Columbia University \and Li-Yang Tan\thanks{Supported by NSF grant CCF-1563122. Email: {\tt liyang@cs.columbia.edu}} \\ Toyota Technological Institute}

\maketitle

\begin{abstract}

A \emph{weight-$t$ halfspace} is a Boolean function $f(x)=\sign(w_1 x_1 + \cdots + w_n x_n - \theta)$ where each $w_i$ is an integer in $\{-t,\dots,t\}.$  We give an explicit pseudorandom generator that $\delta$-fools any intersection of $k$ weight-$t$ halfspaces with seed length $\poly(\log n, \log k,t,1/\delta)$. In particular, our result gives an explicit PRG that fools any intersection of any quasi$\poly(n)$ number of halfspaces of any $\polylog(n)$ weight to any $1/\polylog(n)$ accuracy using seed length $\polylog(n).$ 
Prior to this work no explicit PRG with non-trivial seed length was known even for fooling intersections of $n$ weight-1 halfspaces to constant accuracy.

The analysis of our PRG fuses techniques from two different lines of work on unconditional pseudorandomness for different kinds of Boolean functions.  We extend the approach of Harsha, Klivans and Meka \cite{HKM12} for fooling intersections of regular halfspaces, and combine this approach with results of Bazzi \cite{Bazzi:07} and Razborov \cite{Razborov:09} on bounded independence fooling CNF formulas.  Our analysis introduces new coupling-based ingredients into the standard Lindeberg method for establishing quantitative central limit theorems and associated pseudorandomness results.
\end{abstract}

 \thispagestyle{empty}

\newpage

\setcounter{page}{1}

\section{Introduction}

A \emph{halfspace}, or \emph{linear threshold function} (henceforth abbreviated LTF), over $\bits^n$ is a Boolean function $f$ that can be expressed as $f(x) = \sign(w_1 x_1 + \cdots + w_n x_n - \theta)$ for some real values $w_1,\dots,w_n,\theta$.  
LTFs are a natural class of Boolean functions which play a central role in many areas such as machine learning and voting theory, and have been intensively studied in complexity theory from many perspectives such as circuit complexity \cite{GHR:92,Razborov:92,Hastad:94,SO03}, communication complexity \cite{Nisan93thecommunication,Viola15}, Boolean function analysis \cite{Chow:61,GL:94,Peres:04,Servedio:07cc,ODBook}, property testing \cite{MORS:09random,MORS:10}, pseudorandomness \cite{DGJ+10:bifh,MZ13prg,GKM15} and more.

Because of the limited expressiveness of a single LTF (even a parity function over two variables cannot be expressed as an LTF), it is natural to consider Boolean functions that are obtained by combining LTFs in various ways.  
Perhaps the simplest and most natural functions of this sort are  \emph{intersections of LTFs}, i.e.~Boolean functions of the form $F_1 \wedge \cdots \wedge F_k$ where each $F_j$ is an LTF.  
Intersections of LTFs have been studied in many contexts including Boolean function analysis \cite{Kane14intersection,Sherstov13sicomp,Sherstov13combinatorica},  computational learning (both algorithms \cite{BlumKannan:97,KOS:04,KOS:08,Vempala:10} and hardness results \cite{KlivansSherstov:06,KhotSaket:11jcss}), and pseudorandomness \cite{GOWZ10,HKM12}.  We further note that the set of feasible solutions to an $\zo$-integer program with $k$ constraints corresponds precisely to the set of satisfying assignments of an intersection of $k$ LTFs; understanding the structure of these sets has been the subject of intensive study in computer science, optimization, and combinatorics.

This paper continues the study of intersections of LTFs from the perspective of unconditional pseudorandomness; in particular, we are interested in constructing explicit \emph{pseudorandom generators} (PRGs) for intersections of LTFs.   
Recall the following standard definitions:

\begin{definition}[Pseudorandom generator] \label{def:PRG}
A function $\Gen : \bits^r \to \bits^n$ is said to \emph{$\delta$-fool a function $F: \bits^n \to \{-1,1\}$ with seed length $r$} if
\[
\left|
\Ex_{\bU' \leftarrow \bits^r}\big[F(\Gen(\bU'))\big] -
\Ex_{\bU \leftarrow \bits^n}\big[F(\bU)\big] 
\right| \leq \delta.
\]
Such a function $\Gen$ is said to be a \emph{explicit pseudorandom generator that $\delta$-fools a class $\calF$ of $n$-variable functions} if $\Gen$ is computable by a deterministic uniform $\poly(n)$-time algorithm and $\Gen$ $\delta$-fools every function $F \in {\cal F}.$ \end{definition}

\subsection{Prior work} \label{sec:prior}

Before describing our results, we recall relevant prior work on fooling LTFs and intersections of LTFs.

\pparagraph{Fooling a single LTF.}  In \cite{DGJ+10:bifh} Diakonikolas et al.~showed that any $\tilde{O}(1/\delta^2)$-wise independent distribution over $\bits^n$ suffices to $\delta$-fool any LTF, and thereby gave a PRG for single LTFs with seed length $\tilde{O}(1/\delta^2) \cdot \log n$.  
Soon after, \cite{MZ13prg} gave a more efficient PRG for LTFs with seed length $O(\log n + \log^2(1/\delta)).$  
They did this by first developing an alternative  $\tilde{O}(1/\delta^2) \cdot \log n$ seed length PRG for \emph{regular} LTFs; these are LTFs in which no individual weight is large compared to the total size of all the weights (we give precise definitions later).  
\cite{MZ13prg} built on this PRG for regular LTFs  using structural results for LTFs and PRGs for read-once branching programs to obtain their improved $O(\log n + \log^2(1/\delta))$ seed length for fooling arbitrary LTFs. 
More recently, \cite{GKM15} gave a PRG which $\delta$-fools any LTF over $\bits^n$ using seed length $O(\log(n/\delta)(\log\log(n/\delta))^2)$; this is the current state-of-the-art for fooling a single LTF.

Since the approach of \cite{MZ13prg} for fooling regular LTFs is important for our discussion in later sections, we describe it briefly here.  
The \cite{MZ13prg} PRG for regular LTFs employs hashing and other techniques; its analysis crucially relies on the \emph{Berry--Ess\'een theorem} \cite{berry,esseen}.  
Recall that the Berry--Ess\'een theorem is an ``invariance principle'' for the distribution of linear forms; it (or rather, a special case of it) says that for $w$ a regular vector, the two random variables $w \cdot \bU$ and $w \cdot \bG$, where $\bU$ is uniform over $\bits^n$ and $\bG$ is drawn from the standard $n$-dimensional Gaussian distribution $\normal^n$, are close in CDF distance.  
Roughly speaking, the \cite{MZ13prg} PRG analysis for $\tau$-regular LTFs proceeds by showing that the limited randomness provided by their generator is sufficient to apply the Berry--Ess\'een theorem (over a certain set of roughly $1/\tau^2$ independent random variables).  
We give a more detailed description of the structure of the \cite{MZ13prg} PRG in Section~\ref{sec:approach}. 

\pparagraph{Fooling intersections of regular LTFs.}  Now we turn to results on fooling intersections of LTFs.
Essentially simultaneously with \cite{MZ13prg} (in terms of conference publication), \cite{HKM12} gave a PRG for intersections of \emph{regular} LTFs. 
Their PRG $\tilde{O}((\log k)^{8/5} \tau^{1/5})$-fools any intersection of $k$ many $\tau$-regular LTFs with seed length $O((\log n \log k)/\tau)$.  
As we discuss in in Section~\ref{sec:approach}, the \cite{HKM12} generator has the same structure as the \cite{MZ13prg} PRG for regular LTFs, but with different (larger) parameter settings and a significantly more involved analysis.  
At the heart of the correctness proof of the \cite{HKM12} PRG is a new invariance principle that \cite{HKM12} prove for \emph{$k$-tuples} $(w^{(1)} \cdot \bU, \dots, w^{(k)} \cdot \bU)$ of regular linear forms, generalizing the Berry--Ess\'een theorem which as described above applies to a single regular linear form.  
With this new invariance principle in hand, to prove their PRG theorem \cite{HKM12} argue (similar in spirit to \cite{MZ13prg}) that the limited randomness provided by their generator is sufficient for their new $k$-dimensional invariance principle.

Note that even the $k=1$ case of the invariance principle (the Berry--Ess\'een theorem) does not give a meaningful bound for non-regular linear forms.  
As a simple example, consider the trivial linear form $x_1$, which is highly non-regular: the two one-dimensional random variables $\bU_1$ and $\bG_1$, where $\bU_1$ is uniform over $\bits$ and $\bG_1$ is distributed according to $\normal$, have CDF distance $\approx 0.341$.  
And indeed the analysis of the \cite{HKM12} PRG only goes through for intersections of LTFs in which all the LTFs are regular. 
So while the \cite{HKM12} PRG has an extremely good (polylogarithmic) dependence on the number of LTFs in the intersection, the regularity requirement means that the \cite{HKM12} PRG theorem cannot be applied, for example, to fool the class of intersections of LTFs in which each weight is either 0 or 1.

\pparagraph{The PRG of Gopalan, O'Donnell, Wu, and Zuckerman.}  Around the same time, \cite{GOWZ10} gave a PRG that $\delta$-fools intersections of $k$ arbitrary LTFs with seed length $O((k \log(k/\delta) + \log n) \cdot \log(k/\delta))$, and indeed $\delta$-fools any depth-$k$ size-$s$ decision tree that queries LTFs at its internal nodes with seed length $\violet{O((k\log(k\violet{s}/\delta) + \log n)\cdot \log(k\violet{s}/\delta))}$. 
Their approach builds on the PRG of \cite{MZ13prg} for general LTFs; one central ingredient is a generalization of structural results for single LTFs used in \cite{MZ13prg} to $k$-tuples of LTFs.  
Both this generalization, and the read-once branching program based techniques from \cite{MZ13prg} (which are extended in \cite{GOWZ10} to the context of $k$-tuples of LTFs), necessitate a seed length which is at least linear in $k$.
So while the \cite{GOWZ10} PRG is is notable for being able to handle intersections of general LTFs, their seed length's linear dependence in $k$ means that their seed length is $n^{\Omega(1)}$ whenever $k=n^{\Omega(1)}$, and furthermore their result does not give a non-trivial PRG for intersections of $k \geq n$ many LTFs.

\subsubsection{A conceptual challenge}

We elaborate briefly on an issue related to the linear-in-$k$ dependence of the \cite{GOWZ10} generator discussed above.  A  standard approach to analyze non-regular LTFs, both in pseudorandomness and in other subfields of complexity theory such as analysis of Boolean functions and learning theory \cite{DS13,DRST14,DSTW14,DDS16,FGRW09,CSS16}, is to reduce the analysis of non-regular LTFs to that of regular LTFs via a ``critical index'' argument (see \cite{Servedio:07cc}).  
Indeed, most previous pseudorandomness results for classes involving non-regular LTFs and PTFs---general LTFs 
\cite{DGJ+10:bifh,MZ13prg}, functions of LTFs~\cite{GOWZ10}, degree-$d$ PTFs and functions of such PTFs \cite{DKNfocs10, MZ13prg,DDS14,DS14}---make use of such a reduction to the regular case. 
In working with functions that involve $k$ LTFs (or PTFs), this analysis (see \cite{DDS14,GOWZ10}) involves ``multi-critical-index'' arguments, originating in \cite{GOWZ10},  which necessitate an $\Omega(k)$ seed length dependence; indeed, this linear-in-$k$ dependence was highlighed in \cite{HKM12} as a conceptual challenge to overcome in extending their results to intersections of $k$ non-regular LTFs.  

In this work we give the first analysis that is able to handle an interesting class of functions involving $k$ non-regular LTFs while avoiding this linear-in-$k$ cost that is inherent to multi-critical-index based arguments, and in fact achieving a polylogarithmic dependence on $k$.

\subsection{Our main result:  fooling intersections of low-weight LTFs}

It is easy to see that every LTF $f: \bits^n \to \{-1,1\}$ has some representation as  $f(x)=\sign(w \cdot x - \theta)$ where the coefficients $w_1,\dots,w_n$ are all integers; a standard way of measuring the ``complexity'' of an LTF is by the size of its integer weights. 
It has been known at least since the 1960s \cite{MTT:61,Hong:87, Raghavan:88} that every $n$-variable LTF has an integer representation with $\max |w_i| \leq n^{O(n)}$,
 and H{\aa}stad has shown \cite{Hastad:94} that there are LTFs that in fact require $\max w_i = n^{\Omega(n)}$ for any integer representation.  
However, in many settings, LTFs with \emph{small integer weights} are of special interest.  Such LTFs are often the relevant ones in contexts such as voting systems or contexts where, e.g., biological or physical constraints may limit the size of the weights.  From a more theoretical perspective, it is well known that sample complexity bounds for many commonly used LTF learning methods, such as the Perceptron and Winnow algorithms, are essentially determined by the size of the integer weights.

We say that $f$ is a \emph{weight-$t$ LTF} if it can be represented as $f(x)=\sign(w \cdot x - \theta)$ where each $w_i$ is an integer satisfying $|w_i| \leq t.$  
Note that arguably the simplest and most natural LTFs --- unweighted threshold functions, with the majority function as a special case --- have weight 1.

Our main result is an efficient PRG for fooling intersections of low-weight LTFs:

\begin{theorem}[PRG for intersections of low-weight LTFs] 
\label{thm:prg-informal-low-weight}
For all values of $k,t\in \N$ and $\delta \in (0,1)$, there is an explicit pseudorandom generator that $\delta$-fools any intersection of $k$ weight-$t$ LTFs over $\bits^n$ with seed length $\poly(\log n,\log k, t, 1/\delta)$.
\end{theorem}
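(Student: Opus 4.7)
The plan is to combine HKM's pseudorandom generator for intersections of \emph{regular} LTFs with the Bazzi--Razborov theorem that bounded independence fools CNFs, glued together by a coupling-based refinement of the Lindeberg method. The generator will output a pair $(\bU_H, \bU_{-H})$, where $H \subseteq [n]$ is a union of ``heads'' of size at most $k \cdot \poly(\log k, t, 1/\delta)$, $\bU_H$ is drawn from a $\poly(\log k, t, 1/\delta)$-wise independent distribution, and $\bU_{-H}$ is drawn from the HKM PRG applied to the tail coordinates. Since bounded independence admits a generator of seed length $d \cdot \log |H|$, the total seed length is $\poly(\log n, \log k, t, 1/\delta)$, as claimed.

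First I would perform a critical-index decomposition of each $F_j = \sign(w^{(j)} \cdot x - \theta_j)$ at a regularity parameter $\tau = 1/\poly(\log k, t, 1/\delta)$ chosen small enough that HKM $\delta/3$-fools intersections of $k$ many $\tau$-regular LTFs. Because $|w^{(j)}_i| \leq t$, the weight-$t$ hypothesis keeps each head small: $|H_j| = \poly(t, \log(1/\tau)) = \poly(\log k, t, 1/\delta)$. Setting $H = \bigcup_j H_j$ and conditioning on any outcome $x_H$ of the head bits, the function $\bigwedge_j F_j(x_H, \cdot)$ becomes an intersection of $k$ many $\tau$-regular LTFs with shifted thresholds $\theta_j - w^{(j)}_H \cdot x_H$ in the tail variables. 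A conditional application of the HKM PRG to $\bU_{-H}$ then fools this intersection to error $\delta/3$ uniformly in $x_H$, leaving only the task of fooling the head function $p(x_H) := \Pr_{x'_{-H} \sim \unif}\!\big[\bigwedge_j F_j(x_H, x'_{-H}) = 1\big]$ using $\bU_H$.

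A second invariance step (applying Berry--Ess\'een on each tail conditional probability) shows that $p(x_H)$ is within $\delta/3$ of a multivariate Gaussian orthant probability $Q(x_H) = \Pr_G\!\big[\sigma_j G_j \geq \theta_j - w^{(j)}_H \cdot x_H \text{ for all } j\big]$, where $\sigma_j = \|w^{(j)}_{-H}\|$ and $G$ has covariance determined by the tail vectors. The key structural fact is that each $w^{(j)}_H \cdot x_H$ is an integer-valued linear form with coefficients in $[-t, t]$ on only $|H_j| = \poly(\log k, t, 1/\delta)$ Boolean variables. This lets me discretize $Q$ into $\poly(\log k, t, 1/\delta)$ safety ``slabs'' per LTF, producing a sandwich pair of Boolean CNFs $L \leq Q \leq U$ whose atoms ``$w^{(j)}_H \cdot x_H \in [\alpha_{j,\ell}, \alpha_{j,\ell+1})$'' are low-weight LTFs on $x_H$ that can themselves be written as small CNFs by bucketing coordinates of $H_j$ according to the $2t+1$ possible weight values in $[-t,t]$ and enumerating the bounded-size Hamming-count patterns that realize each slab. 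The resulting composite object is a CNF of $\polylog(k/\delta)$ width and $\poly(k, t, 1/\delta)$ size, and Bazzi--Razborov then implies that $\polylog(k/\delta)$-wise independence on $x_H$ fools it to error $\delta/3$.

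The main obstacle is the simultaneous control of (a) the tail HKM error, (b) the Berry--Ess\'een distortion between $p$ and $Q$, and (c) the CNF sandwich gap for $Q$, all without incurring a $k$-dependent penalty at any single step. This is where the coupling-based refinement of Lindeberg enters: rather than swapping each head coordinate with an independent Gaussian (which produces uncontrolled error on non-regular coordinates and breaks the CNF-literal structure), I plan to couple the head coordinates within each equal-weight bucket of each LTF and swap the bucket in a single block against a Gaussian matching its second moment. The block coupling preserves the integer structure that drives the CNF argument above, while bounding the per-block Lindeberg error at the $O(\tau)$ level that composes cleanly with the HKM regular-case bound. Summing the three error budgets and the seed contributions yields the claimed $\delta$-fooling with seed length $\poly(\log n, \log k, t, 1/\delta)$.
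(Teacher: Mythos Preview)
Your high-level ingredients are right, but the assembly has genuine gaps and misses the structural shortcut that makes the paper's deduction of this theorem a two-line corollary. First, your generator is not oblivious: you split the output as $(\bU_H,\bU_{-H})$ where $H$ is the union of critical-index heads, but $H$ depends on the target LTFs and a PRG may not inspect its target. Second, no head--tail split is needed for weight-$t$ LTFs. The paper's Observation~\ref{obs:sparse-to-reg} gives a clean dichotomy: a weight-$t$ LTF that is not $s$-sparse is automatically $(t/\sqrt{s+1})$-regular, since each normalized weight is at most $t/\|w\|\le t/\sqrt{s+1}$. Taking $s=(t/\tau)^2$, every LTF in the intersection is \emph{entirely} $s$-sparse or \emph{entirely} $\tau$-regular; the sparse ones become a single width-$s$ CNF $G$ with at most $k2^s$ clauses, the regular ones form an intersection $H$, and the target is $H\wedge G$. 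Theorem~\ref{thm:prg-informal-low-weight} then follows immediately from Theorem~\ref{thm:prg-informal} (the PRG for such $(k,s,\tau)$-\CNFLTF{} functions), with no need for your Steps~2--3. In particular, there is no real-valued head function $p(x_H)$ to approximate by a Gaussian orthant probability and then ``sandwich by Boolean CNFs $L\le Q\le U$''---a step that is ill-posed as written, since $Q$ takes values throughout $[0,1]$ and a Boolean sandwich gives no control where $Q\in(0,1)$.

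The paper's generator is the single oblivious HKM hash-and-bucket construction, with the within-bucket independence boosted to $r_\bucket = 4\log k + O((\log(k2^s/\delta_\CNF))^2)$ so that each bucket simultaneously matches the moments HKM needs \emph{and} $\delta_\CNF$-fools CNFs via Bazzi--Razborov. The coupling in the Lindeberg step is also different from your ``block swap against a Gaussian'': it couples the uniform and $r_\bucket$-wise-independent fillings of a single bucket so that the restricted CNF $G\uhr\rho$ agrees on the two draws except with probability $\delta_\CNF$ (such a coupling exists precisely because $r_\bucket$-wise independence fools $G\uhr\rho$). This forces the $(k{+}1)$-st coordinate of the Taylor increment---the CNF output---to be zero almost surely, after which the first $k$ coordinates are handled exactly as in HKM. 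No Gaussian replacement on head coordinates and no integer-slab bookkeeping enter the argument.
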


Recalling the results of \cite{HKM12,GOWZ10} described in Section~\ref{sec:prior}, prior to this work no explicit PRG with non-trivial seed length was known even for fooling intersections of $n$ weight-1 LTFs to constant accuracy.  (In fact, no 
$2^{0.99n}$-time algorithm was known for deterministic approximate counting of satisfying assignments of such an intersection; since such an algorithm is allowed to inspect the intersection of halfspaces which is its input, while a PRG is ``input-oblivious'', giving such an algorithm is an easier problem than constructing a PRG.) In contrast, our result gives an explicit PRG that fools any intersection of any quasi$\poly(n)$ number of LTFs of any $\polylog(n)$ weight to any $1/\polylog(n)$ accuracy using seed length $\polylog(n).$  For any $c>0$ our result also gives an explicit PRG with seed length $n^c$ that fools intersections of $\exp(n^{\Omega(1)})$ many LTFs of weight $n^{\Omega(1)}$ to accuracy $1/n^{\Omega(1)}.$
Recalling the correspondence between intersections of LTFs and $\zo$-integer programs, our PRG immediately yields new deterministic algorithms for approximately counting the number of feasible solutions to broad classes of $\{0,1\}$-integer programs.

\pparagraph{Our most general PRG result.}  We obtain Theorem~\ref{thm:prg-informal-low-weight} as an easy consequence of a PRG that fools a more general class of intersections of LTFs.  
To describe this class we require some terminology. 
We say that a vector $w$ over $\R^n$ is \emph{$s$-sparse} if at most $s$ coordinates among $w_1,\dots,w_n$ are nonzero.  
We similarly say that a linear threshold function $\sign(w \cdot x - \theta)$ is $s$-sparse if $w$ is $s$-sparse. 
Following \cite{HKM12}, we say that a linear form $w=(w_1,\dots,w_n)$ with norm $\|w\| := \left(\sum_{i=1}^n w_i^2\right)^{1/2}$ is \emph{$\tau$-regular} if $\sum_{i=1}^n w_i^4 \leq \tau^2 \|w\|^2$, and we say that a linear threshold function $\sign(w \cdot x - \theta)$ is $\tau$-regular if the linear form $w$ is $\tau$-regular.   
Finally, we say that $F: \bits^n \to \{-1,1\}$ is a \emph{$(k,s,\tau)$-intersection of LTFs} if $F=F_1 \wedge \cdots \wedge F_k$ where each $F_j$ is an LTF which is either $s$-sparse or $\tau$-regular.

Our most general PRG result is the following:

\begin{theorem}[Our most general PRG, informal statement] \label{thm:prg-informal}
For all values of $k,s\in \N$ and $\tau \in (0,1)$, there is an explicit pseudorandom generator with seed length $\poly(\log n,\log k, s,1/\tau)$ that fools any $(k,s,\tau)$-intersection of LTFs  to accuracy $\delta = \poly(\log k, \tau).$
\end{theorem}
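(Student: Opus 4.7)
The plan is to build on the Meka--Zuckerman / Harsha--Klivans--Meka PRG template: a pairwise-independent hash $h\colon [n]\to[R]$ partitions the coordinates into $R = \poly(1/\tau,\log k)$ buckets, and within each bucket the generator places an independent draw from an $r$-wise independent string over $\bits^n$ with $r = \poly(s,\log k,1/\tau)$. Intuitively, the large number of buckets is what enables a Lindeberg-style invariance argument on the $\tau$-regular LTFs, exactly as in \cite{HKM12}, while the within-bucket $r$-wise independence is the new ingredient used to fool the $s$-sparse LTFs via Bazzi \cite{Bazzi:07} and Razborov \cite{Razborov:09}.

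Given a $(k,s,\tau)$-intersection $F = F_1 \wedge \cdots \wedge F_k$, I would split it as $F = F_{\mathrm{sparse}} \cdot F_{\mathrm{reg}}$ according to whether each $F_j$ is $s$-sparse or $\tau$-regular, and bound $|\Ex[F(z)] - \Ex[F(\bU)]|$ by introducing a Gaussian-valued hybrid in which the pseudorandom bits are swapped to Gaussians one bucket at a time. On $F_{\mathrm{reg}}$ the per-swap cost is controlled by the $k$-tuple invariance principle of \cite{HKM12}: after replacing the AND of the regular LTFs by a smooth mollifier, the fourth-order Taylor remainder of the Lindeberg expansion is $\poly(\tau,\log k)$ per bucket, summing to $\poly(\tau,\log k)$ over all $R$ buckets, as long as the within-bucket string has enough moments matching Gaussian---which $r$-wise independence supplies for $r$ a large enough constant.

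The new ingredient is preserving the value of $F_{\mathrm{sparse}}$ during each Lindeberg swap. A sparse LTF can flip on a single coordinate flip, so the fourth-moment/mollifier trick is useless on $F_{\mathrm{sparse}}$; instead, I would introduce a coupling so that at each swap step the bits consumed by $F_{\mathrm{sparse}}$ within the current bucket are \emph{recoupled} to a uniformly distributed bit string, while the bits consumed by $F_{\mathrm{reg}}$ are replaced by Gaussians. Under a typical-hashing event asserting that (i) no regular LTF concentrates too much of its $\ell_2$ mass in any single bucket and (ii) each sparse LTF's $s$-variable support is spread across many buckets, the restriction of $F_{\mathrm{sparse}}$ to the coordinates of any one bucket is a CNF of width at most $s$ over few variables; then $r$-wise independence inside the bucket matches uniform on $F_{\mathrm{sparse}}$ by Bazzi--Razborov, and this can be combined with the smoothed-$F_{\mathrm{reg}}$ expectation via a straightforward ``product'' version of the CNF fooling statement that treats the Gaussian-replaced linear forms as additional variables.

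The main obstacle I anticipate is combining two rather different estimates---an $L^\infty$, mollifier-based Taylor bound for $F_{\mathrm{reg}}$ and an $L^1$, bounded-independence CNF fooling bound for $F_{\mathrm{sparse}}$---inside a single hybrid argument; this is precisely the ``coupling-based ingredients in the Lindeberg method'' foreshadowed in the abstract. The delicate point is designing the coupling in each bucket so that the Gaussian replacement is, on the one hand, genuinely close in distribution to an i.i.d.~Gaussian for the smoothed $F_{\mathrm{reg}}$ (so the HKM invariance analysis applies), and on the other hand exactly realizable as a uniform bit string on the coordinates actually read by $F_{\mathrm{sparse}}$ (so that Bazzi--Razborov applies to the restricted CNF). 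Carrying this through for all $R$ buckets and invoking the good-hashing event with high probability over $h$ yields the stated $\poly(\log k,\tau)$ accuracy using a seed of length $R \cdot r \cdot O(\log n) = \poly(\log n,\log k,s,1/\tau)$.
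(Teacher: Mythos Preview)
Your proposal has the right ingredients---the MZ/HKM generator template, Bazzi--Razborov for the sparse part, and a coupling inside the Lindeberg hybrid---but the hybrid you describe is not the one that works, and the key structural device is missing.

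The genuine gap is the ``Gaussian-valued hybrid.'' You propose swapping pseudorandom bits to Gaussians one bucket at a time, with the sparse-LTF coordinates ``recoupled'' to uniform Boolean bits. This cannot be carried out: the same coordinate $x_i$ may appear in both a $\tau$-regular LTF and an $s$-sparse LTF, so it cannot simultaneously be a Gaussian (for the mollified $F_{\mathrm{reg}}$ analysis) and a $\pm 1$ bit (for the CNF). More fundamentally, the CNF $G$ is only defined on Boolean inputs, so Gaussians have no place in the hybrid itself. In the paper the hybrid is purely Boolean: one walks from the fully pseudorandom string to the fully \emph{uniform} string, replacing one bucket at a time; Gaussians enter only at the very end, via the HKM invariance principle, to bound the anti-concentration (strip) probability needed to pass from the smooth test function to the hard AND.

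The idea you are missing is how the CNF is incorporated into the smooth test function. The paper works with the $(k{+}1)$-variable Bentkus mollifier $\psi^\ast_{k+1}$ applied to the vector $(W^T x,\, G(x))$: the first $k$ arguments are the regular linear forms and the last argument is the \emph{Boolean output} of the CNF. The coupling is then between the \emph{entire} uniform string and the \emph{entire} $r_\bucket$-wise independent string within the current bucket, chosen (using Bazzi--Razborov) so that $G\uhr\rho$ evaluates identically on both with probability $1-\delta_{\CNF}$. Conditioning on the common value of $G\uhr\rho$, the $(k{+}1)$-st coordinate of the Taylor increment $\Delta$ is exactly zero, and the multidimensional Taylor expansion can proceed on the first $k$ (regular) coordinates essentially as in HKM; the conditioning destroys exact moment matching, but the mismatch is controlled by a second application of Bazzi--Razborov to slightly augmented CNFs. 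Your ``typical-hashing'' event that the sparse supports spread across many buckets is neither needed nor used.
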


In Section~\ref{sec:statements} we give the formal statement of Theorem~\ref{thm:prg-informal} and show how Theorem~\ref{thm:prg-informal-low-weight} follows from Theorem~\ref{thm:prg-informal}.

\section{Our approach} \label{sec:approach}

As explained in Section~\ref{sec:prior}, invariance-based arguments are not directly useful for our task of fooling intersections of low-weight LTFs, since the invariance principle does not give a non-trivial bound even for a single low-weight LTF.  
Nevertheless, we are able to show that a generator with the same structure as the \cite{MZ13prg,HKM12} generators (but now with slightly larger parameter settings than were used in the \cite{HKM12} generator) indeed fools any $(k,s,\tau)$-intersection of LTFs.  
We do this via an analysis that brings in ingredients that are novel in the context of fooling intersections of LTFs; in particular, we use results of Bazzi \cite{Bazzi:07} and Razborov \cite{Razborov:09} on bounded independence fooling depth-2 circuits.

How are depth-2 circuits relevant to intersections of LTFs?  
A starting point for our work is to re-express a $(k,s,\tau)$-intersection of LTFs using a different representation, in which we replace each $s$-sparse LTF by a CNF formula computing the same function over $\bits^n.$  
The following is an immediate consequence of the fact that any $s$-sparse LTF depends on at most $s$ variables:
\blue{\begin{fact} 
\label{fact:decompose} 
Let $F$ be a $(k,s,\tau)$-intersection of LTFs.  Then $F \equiv H \wedge G$, where 
\begin{itemize}
\item $H$ is the intersection of at most $k$ many $\tau$-regular LTFs. 
\item $G$ is a width-$s$ CNF formula with at most $k \cdot 2^s$ clauses;
\end{itemize} 
\end{fact}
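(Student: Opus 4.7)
The plan is to partition the $k$ LTFs comprising $F$ according to which of the two defining conditions each one satisfies, and then re-express the sparse part as a CNF while leaving the regular part untouched. Concretely, write $F = F_1 \wedge \cdots \wedge F_k$ and let $R \subseteq [k]$ be the indices $j$ for which $F_j$ is $\tau$-regular, and let $S = [k] \setminus R$ be a choice of indices for which $F_j$ is $s$-sparse (if some $F_j$ is both, we simply put it in $R$). Then define $H := \bigwedge_{j \in R} F_j$ and $G := \bigwedge_{j \in S} F_j$, so that $F \equiv H \wedge G$. By construction $H$ is an intersection of at most $k$ many $\tau$-regular LTFs, establishing the first bullet.

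For the second bullet, I would use the observation that any $s$-sparse LTF $F_j$ is a Boolean function depending on at most $s$ of the coordinates $x_1,\dots,x_n$. Any such function admits a canonical CNF representation obtained by listing one width-$s$ clause per falsifying assignment of its (at most $s$) relevant variables: this yields a width-$s$ CNF with at most $2^s$ clauses computing $F_j$ on $\bits^n$. Taking the conjunction of these CNFs over $j \in S$ produces a single width-$s$ CNF computing $G$, with at most $|S| \cdot 2^s \leq k \cdot 2^s$ clauses in total.

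There is essentially no obstacle here; the statement is a bookkeeping consequence of the definition of $(k,s,\tau)$-intersection together with the elementary fact that every Boolean function of $s$ variables has a width-$s$ CNF of size at most $2^s$. The only minor point worth flagging is the tie-breaking for LTFs that happen to satisfy both the sparse and the regular condition, which we handle simply by placing them in $H$ so the clause count for $G$ is not inflated.
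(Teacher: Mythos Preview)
Your proposal is correct and matches the paper's own justification: the paper simply remarks that the fact is ``an immediate consequence of the fact that any $s$-sparse LTF depends on at most $s$ variables,'' which is exactly the observation you unpack. Your write-up is a faithful (and slightly more detailed) version of that one-line argument.
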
 }

\noindent 
We refer to a function of the form \blue{$H \wedge G$} as above as a \emph{$(k,s,\tau)$-$\CNFLTF$}. 
We can thus restate our goal as that of designing a PRG to fool any $(k,s,\tau)$-$\CNFLTF$:  with this perspective it is not surprising that pseudorandomness tools for fooling CNF formulas can be of use.

\subsection{The structure of our PRG} \label{sec:structure}

To describe our approach we need to explain the general structure of the PRG which is used in \cite{MZ13prg} for regular LTFs, in \cite{HKM12} for intersections of regular LTFs, and in our work for $(k,s,\tau)$-intersections of LTFs.  
The construction  uses an $r_\hash$-wise independent family $\calH$ of hash functions $h : [n] \to [\ell]$, and an $r_\bucket$-wise independent generator outputting strings in $\bits^n$, which we denote $\mathscr{G}$.  
The overall generator, which we denote $\Gen$, on input $(h,X^{(1)},\dots,X^{(\ell)})$ outputs the string $\Gen(h,X^{(1)},\dots,X^{(\ell)}) := Y \in \bits^n,$ where $Y_{h^{-1}(b)} = \mathscr{G}(X^{(b)})_{h^{-1}(b)}$ for all $b \in [\ell]$. 
(Here and elsewhere, for $Y$ an $n$-bit string and $S \subseteq [n]$ we write $Y_S$ to denote the $|S|$-bit string obtained by restricting $Y$ to the coordinates in $S$.)

The \cite{MZ13prg} PRG for ${\tau}$-regular LTFs instantiates this construction with 
\[
\ell = 1/{\tau}^2,  \ \ {r_\hash} = 2, \ \ \text{and} \ \ {r_\bucket}=4,
\]
while the \cite{HKM12} PRG for intersections of $k$ many ${\tau}$-regular LTFs takes 
\[
\ell=1/{\tau}, \ \ {r_\hash} = 2 \log k, \ \ \text{and} \ \ {r_\bucket} =4\log k.
\]
\noindent We state the exact parameter settings which we use to fool $(k,s,\tau)$-intersections of LTFs in Section~\ref{sec:our-PRG} (the specific values are not important for our discussion in this section).

\subsection{Sketch of the \cite{MZ13prg,HKM12} analysis}

As our analysis (sketched in Section~\ref{sec:our-analysis}) builds on \cite{MZ13prg,HKM12}, in this subsection we sketch the \cite{MZ13prg,HKM12} arguments establishing correctness of the PRG $\Gen$ for regular LTFs and intersections of regular LTFs.

A high-level sketch of the \cite{MZ13prg} analysis showing that $\Gen$ fools any regular LTF $F(x)=\sign(w \cdot x - \theta)$ is as follows:  the hash function $h : [n]\to [\ell]$ partitions the $n$ coefficients $w_1,\dots,w_n$ into $\ell$ buckets.  
The pairwise independence of $\bh \leftarrow \calH$ and the regularity of $w$ are together used to show  that each of the $\ell$ buckets receives essentially the same amount of ``coefficient weight.''  
The idea then is to view the sum $w \cdot \bY$, where $\bY$ is the output of the generator, as a sum of $\ell$ \emph{independent} random variables (note that the inputs $\bX^{(1)},\dots,\bX^{(\ell)} \in \bits^r$ to $\Gen$ are indeed mutually independent), one for each bucket, and use the Berry--Ess\'een theorem on that sum.\footnote{Note that if the weight vector $w$ is non-regular, then it is in general impossible for any hash function, even a fully independent one, to spread the coefficient weight out evenly among the $\ell$ buckets, and consequently the Berry--Ess\'een theorem cannot be applied (as, intuitively, it requires that no individual random variable summand is ``too heavy'' compared to the ``total weight'' of the sum). 
This is why the overall approach requires regularity.} 
The four-wise independence of $\mathscr{G}$ is used to ensure that each of the $\ell$ summands---the $b$-th summand corresponding to $w_{\bh^{-1}(b)} \cdot \bY_{\bh^{-1}(b)}$, the contribution from the $b$-th bucket---has the moment properties that are required to apply the Berry--Ess\'een theorem.  
Note that in this analysis the Berry--Ess\'een theorem is used as a ``black box.''

Since \cite{HKM12} have to prove the $k$-dimensional invariance principle that they use in place of the Berry--Ess\'een theorem, their analysis is necessarily more involved, but at a high level it follows a similar approach to the \cite{MZ13prg} analysis sketched above.  
A sketch of their argument that $\Gen$ fools any intersection $F=F_1 \wedge \cdots \wedge F_k$ of regular LTFs is as follows:

\begin{enumerate}

\item \cite{HKM12} first argue that for \emph{any} smooth test function $\psi: \R^k \to [0,1]$---replacing the ``hard threshold'' function $\mathbf{1}(v_1 \leq \theta_1) \cdot \mathbf{1}(v_2 \leq \theta_2) \cdot \cdots  \mathbf{1}(v_k \leq \theta_k)$, which corresponds to $k$-dimensional CDF distance---the pseudorandom distribution output by the generator fools the test function $\psi$ relative to an $\normal^n$ Gaussian input to $\psi.$  
This is done by

\begin{enumerate}

\item first arguing (similar to \cite{MZ13prg}) that the $(2 \log k)$-wise independent hash function $\bh\leftarrow\calH$ and the regularity of each LTF $F_k$ together ``spread the coefficient weight'' of the $k$ LTFs roughly evenly among the $\ell$ buckets (we note that this part of the argument has nothing to do with the function $\psi$);

\item then a hybrid argument across the $\ell$ buckets, using the smoothness of $\psi$ and moment properties of the random variables corresponding to the $\ell$ buckets (which now follow from the $(4 \log k)$-wise independence of $\mathscr{G}$), is used to bound
\begin{equation} \left| \Ex_{\bY\leftarrow\Gen}\big [\psi(w^{(1)}\cdot \bY,\ldots,w^{(k)}\cdot \bY)\big] - \Ex_{\bG\leftarrow \normal^n}\big[\psi(w^{(1)}\cdot \bG,\ldots,w^{(k)}\cdot \bG)\big] \right|. 
\label{eq:smooth-small}
\end{equation}
\end{enumerate} (Such a hybrid argument is a central ingredient in the Lindeberg-style ``replacement method'' proof of the Berry--Ess\'een theorem, and is also used in \cite{HKM12}'s proof of their invariance principle for intersections of $k$ regular LTFs.) We note that multidimensional Taylor's theorem plays a crucial role in bounding the difference in expectation between $\psi$ applied to two random variables, which is done to ``bound the distance'' at each step of the hybrid.

\item Next \cite{HKM12} use a particular smooth function $\psi^\ast$ based on a result of Bentkus \cite{Bentkus:90} and a Gaussian surface area bound for intersections of $k$ halfspaces due to Nazarov \cite{Nazarov:03}  to pass from fooling the smooth test function $\psi^\ast$ to fooling the ``hard threshold'' function corresponding to CDF distance.  This essentially amounts to using the fact that (\ref{eq:smooth-small}) is small to show that $\left| \Ex_{\bY\leftarrow\Gen} [F(\bY)] - \Ex_{\bG\leftarrow \normal^n} [F(\bG)] \right|$
is also small.  
Given this, the fact that the generator fools $F$, i.e.~that $\left|\Ex_{\bY \leftarrow\Gen}[F(\bY)] - \Ex_{\bX \leftarrow\bits^n}[F(\bX)]\right|$ 
is small, follows from \cite{HKM12}'s invariance principle, which bounds 
$\left|\Ex_{\bG\leftarrow \normal^n}[F(\bG)] - 
\E_{\bX\leftarrow\bits^n}[F(\bX)]\right|$.
We note that this second step of \cite{HKM12}'s analysis does not use regularity of the $F_j$'s at all (but their invariance principle does require that each $F_j$ is regular).

\end{enumerate}

\subsection{Sketch of our analysis} \label{sec:our-analysis}

Here we give an overview of our proof that $\Gen$, with suitable parameters, fools any $(k,s,\tau)$-$\CNFLTF$ \blue{$F=H \wedge G$}. Recall that $H$ is an intersection of $k$ many $\tau$-regular LTFs and $G$ is a $(k \cdot 2^s)$-clause CNF, and that the  difference between our task and that of \cite{HKM12} is that we must handle the CNF $G$ in addition to the intersection of regular LTFs $H$.  While it is not difficult to see, as a consequence of \cite{Bazzi:07,Razborov:09}, that the \cite{HKM12} generator with suitable parameters (i) fools $H$, and (ii) fools $G$, it is far from clear \emph{a priori} that it fools $H \wedge G$.  We show this via a rather delicate argument, which involves a novel extension of the Lindeberg method that is at the heart of all PRGs in this line of work~\cite{GOWZ10,MZ13prg,HKM12}.  To surmount the technical challenges that arise in our setting (which we described next),  our analysis features several new ingredients which are not present in the analyses of~\cite{GOWZ10,MZ13prg,HKM12}, or indeed in other Lindeberg-type proofs of quantitative central limit theorems that we are aware of. The ideas in this new style of coupling-based analysis, which we outline in Section~\ref{sec:new-ingredients} below, may be of use elsewhere.

\paragraph{The standard Lindeberg setup, and a new challenge in our setting.} As is standard in Lindeberg-style proofs, our analysis focuses on a particular smooth test function, which \violet{for us takes $k+1$ arguments and which} we denote $\psi^\ast_{k+1}$. This should be thought of as the $(k+1)$-variable version of the smooth function of Bentkus \cite{Bentkus:90}, which was used by \cite{HKM12} as mentioned in the preceding subsection.  Crucually, while $\psi^\ast_{k+1}$ maps all of $\R^{k+1}$ to $[-1,1]$, in our arguments this test function will only ever receive inputs 
from \blue{$\R^k \times \{\pm 1\}$}; indeed, its \blue{last ($(k+1)$-st) coordinate} will always be a Boolean value which is the output of the CNF $G$.  

The heart of our proof lies in showing that for this specific smooth test function $\psi^\ast_{k+1}$ (which should be thought of as a proxy for \blue{$\text{{\sc And}}(\sign(\violet{v_1} - \theta),\dots,\sign(\violet{v_{k}} - \theta_{k}), v_{k+1}))$}), the pseudorandom distribution output by the generator fools the test function $\psi^\ast_{k+1}$ relative to a uniform random input drawn from $\bits^n.$  This is done by means of a hybrid argument, the analysis of which (like that of \cite{GOWZ10,HKM12}) employs a multidimensional version of Taylor's theorem.  However, the fact that the distinguished \blue{last} coordinate of $\psi^\ast_{k+1}$ always receives a $\{\pm 1\}$-valued input---in particular, an input whose magnitude changes by a large amount (namely 2) when it does change---introduces significant challenges in using the multidimensional Taylor's theorem.   Recall that Taylor's theorem quantifies the following intuition: roughly speaking, if the input to a smooth function $\psi$ is only changed by a small amount $\Delta$, then the resulting change in its output value, $\psi(v + \Delta)- \psi(v)$, is correspondingly small as well.  Naturally, if $\Delta$ is large then Taylor's theorem does not give useful bounds.  

\subsubsection{New ingredients in our approach}
\label{sec:new-ingredients} 

Taylor's theorem is the core ingredient in Lindeberg-style proofs of invariance principles (see e.g.~\cite{Tao254A} and Chapter~11 of \cite{ODBook}) and associated pseudorandomness results (see e.g.~\cite{GOWZ10,MZ13prg,HKM12}), where it is used to bound the distance incurred by a single step of the hybrid argument. As mentioned above, in order for Taylor's theorem to give a useful bound when it is applied to re-express $\psi^\ast_{k+1}(v+\Delta)$ (in terms of $\psi^\ast_{k+1}(v)$, various derivatives of $\psi^\ast_{k+1}$ at $v$, $\Delta$, and an error term), the quantity $\Delta$ must be ``small.''  This is a problem in our context since the distinguished \blue{last} coordinate of $\psi^\ast_{k+1}$'s argument (the output of the CNF $G$) is $\{\pm 1\}$-valued, so the \blue{last} coordinate of $\Delta$ alone may already be as large as 2.  We get around this difficulty by utilizing a carefully chosen coupling between two adjacent hybrid random variables and decomposing each of the two relevant arguments \violet{to which $\psi^\ast_{k+1}$ is applied}  (each of which is a random variable) in a very careful way.  One of these random variables is expressed as $\bv + \bDelta^\unif$ (corresponding to ``filling in the current bucket uniformly at random'') and the other is $\bv + \bDelta^\pseudo$ (corresponding to ``filling in the current bucket pseudorandomly"); roughly speaking, in order to succeed our analysis  must show that the magnitude of $\E[\psi^\ast_{k+1}(\bv + \bDelta^\unif)] - \E[\psi^\ast_{k+1}(\bv + \bDelta^\pseudo)]$ is suitably small.   The key property of the coupling we employ is that it 
ensures that the \blue{last} coordinates of both random variables $\bDelta^\unif$ and $\bDelta^\pseudo$ are almost always zero; in fact, one of them will actually be always zero, see Equation (\ref{eq:always-zero}).  (We note that if no coupling is used then the \blue{last} coordinate of $\bDelta^\pseudo$ can be as large as 2 with constant probability.) The existence of such a favorable coupling follows from the fact that each bucket of $\Gen$ is, by virtue of its bounded independence and the results of Bazzi \cite{Bazzi:07} and Razborov \cite{Razborov:09}, ``sufficiently pseudorandom'' to fool CNF formulas.

However, the way that we structure the random variables $\bv, \bDelta^\unif$, and $\bDelta^\pseudo$ to ensure that the \blue{last} coordinate of each $\bDelta$ is almost always small (as discussed above), introduces a new complication, which is that now the random variables $\bv$ and $\bDelta^\unif$ are not independent (and neither are $\bv$ and $\bDelta^\pseudo$). This situation does not arise in standard uses of the Lindeberg method, either in proving invariance principles or in applications to pseudorandom generators. In all of these previous proofs, independence is used to show that various first derivative, second derivative, etc.~terms in the Taylor expansions for the two adjacent random variables cancel out perfectly upon subtraction (using matching moments).  To surmount this lack of independence, we exploit the fact that our coupling lets us re-express the coupled joint distribution (over a pair of vectors in \blue{$\R^{k} \times \{ \pm 1\}$}) as a mixture of three joint distributions over pairs of $(k+1)$-dimensional vectors in such a way that one component of the mixture is entirely supported on \blue{$(\R^k \times \{1\}) \times (\R^k \times \{1\}),$} one is entirely supported on 
\blue{$(\R^k \times \{-1\}) \times (\R^k\times \{-1\}),$} and the third has a very small mixing weight.  Under each of the first two joint distributions (supported entirely on pairs that agree in the \blue{last} coordinate), $\bv$ and $\bDelta^\unif$ will indeed be independent, and so will $\bv$ and $\bDelta^\pseudo$.  

However, performing the hybrid method using these conditional distributions presents another challenge: while now $\bv$ and $\bDelta^\unif$ are independent (and likewise for $\bv$ and $\bDelta^\pseudo$), the moments of these conditional random variables may not match perfectly.   We deal with this by exploiting the fact that each pseudorandom distribution that we consider ``filling in a single bucket'' can in fact fool, to very high accuracy, any of $\poly(n)$ many new circuits which arise in our analysis of the multidimensional Taylor expansion (intuitively, these are ``slightly augmented'' CNFs or DNFs).  This allows us to show that while we do not get perfect cancellation, the \violet{relevant} moments under the conditional distributions are adequately close to each other.   Finally, our coupling-based perspective also allows us to bound the (crucial) final error term resulting from Taylor's theorem by reducing its analysis to that of the corresponding error term in \cite{HKM12}.

The above is a sketch of how we show that $\Gen$ fools the smooth test function $\psi^\ast_{k+1}$.  To pass from fooling $\psi^\ast_{k+1}$ to fooling the ``hard threshold'' {\sc And} function, we combine the \cite{HKM12} invariance principle with a simple relationship, Claim~\ref{claim:psi-extra-coordinate}, which we establish between the anti-concentration of the $(k+1)$-dimensional input to the $\psi^\ast_{k+1}$ function (with its distinguished \blue{last} coordinate corresponding to outputs of the CNF) and its $k$-dimensional marginal which excludes the \blue{last} coordinate (all coordinates of which correspond to outputs of regular linear forms, i.e.~the setting of \cite{HKM12}).

\section{Notation and preliminaries} \label{sec:prelims}

\pparagraph{LTFs and regularity.} We recall that a linear threshold function (LTF) is a function of the form $\sign(w \cdot x - \theta)$, where $\sign(z)$ is 1 if $z > 0$ and is $-1$ otherwise.  We view $-1$ as {\sc True} and $1$ as {\sc False} throughout the paper.

We write $W \in \R^{n\times k}$ to denote the matrix whose $j$-th column is the weight vector of the $j$-th LTF in an intersection of $k$ LTFs.  We assume that each such LTF has been normalized so that its weight vector has norm 1. For $j \in [k]$ (indexing one of the LTFs) we write $W^j$ to denote the $j$-th column of $W$ (so $\|W^j\|=1$ for all $j$), and for $B \sse [n]$ (a subset of variables) we write $W_B$ to denote the matrix formed by the rows of $W$ with indices in $B$.  Combining these notations, $W^j_B$ denotes the $|B|$-element column vector which is obtained from $W^{j}$ by taking those entries given by the indices in $B$. Throughout the paper we will write $\vec{\theta}$ to denote the $k$-tuple $\vec{\theta} = (\theta_1,\ldots,\theta_k) \in \R^k$. 

We say that a vector $w \in \R^n$ is \emph{$\tau$-regular} if $\sum_{i=1}^n w_i^4 \leq \tau^2 \|w\|^2$, and that it is \emph{$s$-sparse} if it has at most $s$ non-zero entries.  We use the same terminology to refer to an LTF $\sign(w \cdot x - \theta).$  We say that a matrix $W \in \R^{n \times k}$ is $\tau$-regular if each of its columns is $\tau$-regular.

A \emph{restriction} $\rho$ fixing a subset $S \subseteq [n]$ of $n$ input variables is an element of $\{0,1\}^S$; it corresponds to setting the variables in $S$ in the obvious way and leaving the variables outside $S$ free.  Given an $n$-variable function $f$ and a restriction $\rho$ we write $f \uhr \rho$ to denote the function obtained by setting some of the input variables as dictated by $\rho$.

\pparagraph{Probability background.}
We recall some standard definitions of bounded-independence distributions and hash families.
A distribution $\calD$ over $\bits^n$ is \emph{$r$-wise independent} if for every $1 \leq i_1 < \cdots < i_r \leq n$ and every $(b_1,\dots,b_r) \in \bits^r$, we have
\[
\Prx_{\bX \leftarrow \calD}\big[\bX_{i_1} = b_1 \text{~and~} \cdots \text{~and~}\bX_{i_r} = b_r\big] = 2^{-r}.
\]
We recall the results of \cite{Bazzi:07,Razborov:09} which state that bounded-independence distributions fool CNF formulas:

\begin{theorem} [Bounded independence fools \violet{depth-2 circuits}] \label{thm:BR}
Let $f$ be any $M$-clause CNF formula or $M$-term \violet{DNF} formula.  Then $f$ is $\delta$-fooled by any $O((\log(M/\delta))^2)$-wise independent distribution.
\end{theorem}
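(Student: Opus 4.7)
The plan is to use the Bazzi--Razborov sandwiching polynomial method. The starting point is the standard duality: any $r$-wise independent distribution $\delta$-fools a Boolean function $f:\bits^n \to \{0,1\}$ whenever one can exhibit real polynomials $p_\ell, p_u$ of degree at most $r$ with $p_\ell(x) \leq f(x) \leq p_u(x)$ pointwise and $\mathbb{E}_\bU[p_u(\bU) - p_\ell(\bU)] \leq \delta$, since such a distribution reproduces uniform expectations on every monomial of degree at most $r$. Because CNFs and DNFs are swapped by negation and $\delta$-fooling is invariant under negation, it suffices to handle the DNF case $f = T_1 \vee \cdots \vee T_M$, viewing each $T_i$ as a $\{0,1\}$-valued indicator.

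The core construction is truncated inclusion--exclusion. Setting $Y(x) := |\{i : T_i(x)=1\}|$ and $\sigma_t(x) := \sum_{|S|=t}\prod_{i \in S} T_i(x) = \binom{Y(x)}{t}$, the identity $f(x) = \sum_{t \geq 1}(-1)^{t+1}\sigma_t(x)$ is exact, and the Bonferroni inequalities say that truncating at odd $t = 2k-1$ yields a pointwise upper bound $p_u$ on $f$ and at even $t = 2k$ a pointwise lower bound $p_\ell$, with pointwise gap $p_u - p_\ell = \sigma_{2k}$. For the degree count I would first reduce to width-$w$ DNFs with $w = C\log(M/\delta)$: for such DNFs every $\sigma_t$ has degree at most $wt$, so choosing $k = \Theta(\log(M/\delta))$ produces polynomials of degree $O(wk) = O(\log^2(M/\delta))$, matching the claimed bound.

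The main obstacle is twofold. First, bounding the sandwich gap
\[
\mathbb{E}_\bU[\sigma_{2k}(\bU)] \;=\; \sum_{|S|=2k}\Pr_\bU\!\left[T_i(\bU)=1 \ \forall\, i \in S\right]
\]
calls for a careful combinatorial argument: using the width-$w$ bound, satisfying $2k$ terms simultaneously either fixes enough variables (small per-summand probability) or forces the terms to share many variables (few such summands), and summing these contributions should give $O(\delta)$ once $k = \Theta(\log(M/\delta))$. Second, and more delicately, handling terms of width exceeding $w$ cannot simply proceed by discarding them, since doing so replaces $f$ by a different function $\tilde f$ and a polynomial sandwich for $\tilde f$ is not automatically a sandwich for $f$. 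Razborov's approach resolves this by augmenting the truncated sum with correction terms that absorb the wide-term contribution without inflating the degree, exploiting that the \emph{expected} contribution of wide terms is at most $M \cdot 2^{-w} \leq \delta/3$. Putting the two ingredients together produces a degree-$O(\log^2(M/\delta))$ sandwich with expected gap at most $\delta$, which via Bazzi's lemma yields the theorem.
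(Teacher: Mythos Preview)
The paper does not prove this theorem; it is quoted from \cite{Bazzi:07,Razborov:09} and used as a black box, so there is no in-paper proof to compare against.

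Your sketch of the Bazzi--Razborov argument has a real gap at its core: truncated inclusion--exclusion does \emph{not} produce a usable sandwich, and the dichotomy you offer for bounding $\E_\bU[\sigma_{2k}(\bU)]$ fails already on simple instances. Take the width-$1$ DNF $f=x_1\vee\cdots\vee x_M$ on $M$ distinct variables: then $Y(\bU)\sim\mathrm{Bin}(M,\tfrac12)$ and $\E[\sigma_{2k}]=\E\bigl[\tbinom{Y}{2k}\bigr]=\tbinom{M}{2k}2^{-2k}$, which for $k=\Theta(\log(M/\delta))$ is of order $M^{\Theta(\log M)}$, not $O(\delta)$. Your dichotomy does not apply here---any $2k$ of the terms share \emph{no} variables (so all $\tbinom{M}{2k}$ summands are present, not ``few''), yet together they fix only $2k$ variables (so the per-summand probability $2^{-2k}$ cannot offset the count). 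This obstruction is exactly why the Linial--Nisan conjecture stood open for two decades; naive Bonferroni was long known to be insufficient. The actual Razborov construction is structurally different: after the width reduction it combines the product form $1-f=\prod_i(1-T_i)$ with low-degree $\ell_2$-approximators to prefix sub-DNFs (supplied by Linial--Mansour--Nisan Fourier concentration), using an algebraic trick to convert $\ell_2$-closeness into a pointwise one-sided bound. That conversion, not Bonferroni truncation, is the heart of the proof, and your outline would need to replace its core accordingly.
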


A family $\calH$ of functions from $[n]$ to $[\ell]$ is said to be an \emph{$r$-wise independent hash family} if for every $1 \leq i_1 < \cdots < i_r \leq n$ and $(j_1,\dots,j_r) \in [\ell]^r$, we have
\[
\Prx_{\bh \leftarrow \calH}\big[\bh(i_1) = j_1 \text{~and~} \cdots \text{~and~} \bh(i_r)=j_r\big]=\ell^{-r}.
\]
When $S$ is a set the notations $\Prx_{\bX \leftarrow S}[\cdot], \Ex_{\bX \leftarrow S}[\cdot]$ indicate that the relevant probability or expectation is over a uniform draw of $\bX$ from set $S$.  Throughout the paper we use bold fonts such as $\bX,\bU,\bh,$ etc. to indicate random variables.  We write $\normal$ to denote the standard normal distribution with mean 0 and variance 1.

\pparagraph{Calculus.} We say that a function $\psi: \R^k \to \R$ is \emph{smooth} if its first through fourth derivatives are uniformly bounded.  For smooth $\psi: \R^k \to \R$, $v \in \R^k$, and $j_1,\dots,j_r \in [k]$, we write $(\partial_{j_1,\dots,j_r} \psi)(x)$ to denote $\partial_{j_1} \partial_{j_2} \cdots \partial_{j_r} \psi(x)$, and for $s=1,2,\dots$ we write
\[
\|\psi^{(s)}\|_1 \quad \quad \text{to denote} \quad \quad
\sup_{v\in \R^k} \left\{ 
\sum_{j_1,\dots,j_s \in [k]} |(\partial_{j_1,\dots,j_s}\psi)(v)|
\right\}.
\]

Given indices $j_1,\dots,j_r \in [k]$, we write $(j_1,\dots,j_r)!$ to denote $s_1! s_2! \cdots s_k!$, where for each $\ell \in [k]$, $s_\ell$ denotes the number of occurrences of $\ell$ in $(j_1,\dots,j_r).$  We will use the following form of multidimensional Taylor's theorem (see e.g.~Fact~4.3 of \cite{HKM12}):

\begin{fact}[Multidimensional Taylor's theorem] \label{fact:taylor}
Let $\psi: \R^k \to \R$ be smooth and let $v,\Delta \in \R^k.$  Then
\begin{align*}
\psi(v+\Delta) &= 
\psi(v) 
+ \sum_{j \in [k]} (\partial_j \psi)(v) \Delta_j
+ \sum_{j,j' \in [k]} {\frac 1 {(j,j')!}} (\partial_{j,j'} \psi)(v) \Delta_{j} \Delta_{j'} \\
& + \sum_{j,j',j'' \in [k]} {\frac 1 {(j,j',j'')!}} (\partial_{j,j',j''} \psi)(v) \Delta_{j} \Delta_{j'} \Delta_{j''} 
+ \err(v,\Delta),
\end{align*}
where $|\err(v,\Delta)| \leq \|\psi^{(4)}\|_1 \cdot \max_{j \in [k]} |\Delta_j|^4.$
\end{fact}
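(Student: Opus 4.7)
The plan is to reduce the multidimensional statement to the one-dimensional Taylor theorem by introducing the auxiliary function $g : [0,1] \to \R$ defined by $g(t) := \psi(v + t\Delta)$. Since $\psi$ is smooth, so is $g$, and the single-variable Taylor theorem with Lagrange remainder at fourth order gives
\[ g(1) = g(0) + g'(0) + \tfrac{1}{2}g''(0) + \tfrac{1}{6}g'''(0) + \tfrac{1}{24}g^{(4)}(\xi) \]
for some $\xi \in [0,1]$. Since $g(1) = \psi(v+\Delta)$ and $g(0) = \psi(v)$, it suffices to identify each $g^{(r)}(0)/r!$ with the $r$-th order polynomial that appears in the statement and to bound $g^{(4)}(\xi)/24$ by $\|\psi^{(4)}\|_1 \cdot \max_j|\Delta_j|^4$.

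The next step is to compute $g^{(r)}(t)$ by an $r$-fold application of the multivariate chain rule, obtaining
\[ g^{(r)}(t) = \sum_{(j_1,\ldots,j_r) \in [k]^r} (\partial_{j_1,\ldots,j_r}\psi)(v+t\Delta)\,\Delta_{j_1}\cdots\Delta_{j_r}. \]
Evaluating at $t=0$ and dividing by $r!$ produces the $r$-th Taylor term in its ``ordered-tuple with $1/r!$'' form. To rewrite this in the ``ordered-tuple with $1/(j_1,\ldots,j_r)!$'' form used in the statement, I would group tuples $(j_1,\ldots,j_r)$ according to their underlying multiplicity vector $\alpha \in \N^k$ with $|\alpha| = r$: by equality of mixed partials (which holds because $\psi$ is smooth) all $r!/\alpha!$ orderings of a given $\alpha$ contribute the same summand, so the counting identity $\frac{1}{r!} \cdot \frac{r!}{\alpha!} = \frac{1}{\alpha!}$ converts one convention into the other (recalling that $(j_1,\ldots,j_r)! = \alpha!$ when the tuple has multiplicity vector $\alpha$).

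Finally, for the remainder I would write $\tfrac{1}{24}g^{(4)}(\xi) = \tfrac{1}{24}\sum_{(j_1,\ldots,j_4)}(\partial_{j_1,\ldots,j_4}\psi)(v+\xi\Delta)\,\Delta_{j_1}\Delta_{j_2}\Delta_{j_3}\Delta_{j_4}$, pull out $\max_j|\Delta_j|^4 \geq |\Delta_{j_1}\Delta_{j_2}\Delta_{j_3}\Delta_{j_4}|$ from every summand, and invoke the definition $\|\psi^{(4)}\|_1 = \sup_u \sum_{(j_1,\ldots,j_4)}|(\partial_{j_1,\ldots,j_4}\psi)(u)|$ to bound the remaining sum. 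The factor $1/24$ is then absorbed into the looser constant $1$ in the stated bound. There is no real analytic obstacle here: everything reduces to single-variable Taylor and the chain rule. The only step that requires care is the combinatorial bookkeeping in the second paragraph, which reconciles the two equivalent ways of writing a symmetric multilinear expansion.
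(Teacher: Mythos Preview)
The paper does not give a proof of this fact; it is stated with a citation to Fact~4.3 of \cite{HKM12}. Your approach---reducing to the one-variable Taylor theorem via $g(t)=\psi(v+t\Delta)$, computing $g^{(r)}$ by the chain rule, and bounding the Lagrange remainder using the definition of $\|\psi^{(4)}\|_1$---is the standard and correct way to establish such a statement.

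One caution on the combinatorial conversion you describe. Your chain-rule computation genuinely gives
\[
\frac{g^{(r)}(0)}{r!}=\frac{1}{r!}\sum_{(j_1,\ldots,j_r)\in[k]^r}(\partial_{j_1,\ldots,j_r}\psi)(v)\,\Delta_{j_1}\cdots\Delta_{j_r}=\sum_{|\alpha|=r}\frac{1}{\alpha!}(\partial^\alpha\psi)(v)\,\Delta^\alpha,
\]
which is the correct $r$-th Taylor polynomial. But the expression in the statement, read literally as a sum over \emph{ordered} tuples with coefficient $1/(j_1,\ldots,j_r)!$, evaluates (after grouping by the multiplicity vector $\alpha$) to $\sum_{|\alpha|=r}\frac{r!}{(\alpha!)^2}(\partial^\alpha\psi)(v)\,\Delta^\alpha$, overcounting each genuinely mixed term by the multinomial coefficient $\binom{r}{\alpha}$. (For instance, at second order the cross term $\partial_{j,j'}\psi\,\Delta_j\Delta_{j'}$ with $j\neq j'$ appears with coefficient $2$ in the statement but with coefficient $1$ in the true Taylor expansion.) So the step ``the counting identity $\tfrac{1}{r!}\cdot\tfrac{r!}{\alpha!}=\tfrac{1}{\alpha!}$ converts one convention into the other'' does not do what you claim. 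This appears to be a notational slip in the stated fact (the intended sum is presumably over multi-indices, or the coefficient should simply be $1/r!$); it is harmless for the paper's purposes since only order-of-magnitude bounds on each Taylor term are ever used, but you should not assert that the two displayed expressions are equal.
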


\pparagraph{Useful results from  \cite{HKM12}.}  The following notation will be useful:   for $0<\lambda<1$, $k \geq 1$, and $\vec{\theta} = (\theta_1,\ldots,\theta_k) \in \R^k$, we define
\[
\inner_{k,\vec{\theta}} = \big\{v \in \R^k\colon v_j \le \theta_j \text{\ for all $j\in [k]$}\big\}, \quad 
\outter_{\lambda,k,\vec{\theta}} = \big\{v \in \R^k \colon v_j \ge \theta_j + \lambda \text{~for some $j\in [k]$}\big\},
\]
\[
\strip_{\lambda,k,\vec{\theta}} = \R^k \setminus (\inner_{k,\vec{\theta}} \cup \outter_{\lambda,k,\vec{\theta}}).
\]

We recall the main result of \cite{HKM12}:

\begin{theorem} [Invariance principle for polytopes, Theorem~3.1 of \cite{HKM12}] \label{thm:HKM-invariance}
Let $W \in \R^{n \times k}$ be $\tau$-regular with each column $W^j$ satisfying $\|W^j\|=1$.  Then for all $\vec{\theta} \in \R^k$, we have
\[
\left|
\Prx_{\bU \leftarrow \bits^{n}}\big[\,W^T \bU \in \inner_{k,\vec{\theta}}\,\big]
-
\Prx_{\bG \leftarrow \normal^n}\big[\,W^T \bG \in \inner_{k,\vec{\theta}}\,\big]
\right| 
= O\big((\log k)^{8/5} (\tau \log(1/\tau))^{1/5}\big).
\]
\end{theorem}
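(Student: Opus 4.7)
The plan is a Lindeberg-style replacement argument combined with a Bentkus-type smooth mollifier and Nazarov's bound on the Gaussian surface area of an intersection of $k$ halfspaces. First I would construct, for a smoothing parameter $\lambda>0$ to be optimized, a smooth function $\psi^\ast_\lambda\colon \R^k \to [0,1]$ that equals $1$ on $\inner_{k,\vec{\theta}}$, vanishes on $\outter_{\lambda,k,\vec{\theta}}$, and whose derivative norms satisfy $\|\psi^{\ast(s)}_\lambda\|_1 = O\big((\log k)^{O(s)}/\lambda^s\big)$ for $s \le 4$. A convenient realization is the convolution of the indicator of $\inner_{k,\vec{\theta}}$ with a $k$-dimensional Gaussian of bandwidth $\propto \lambda$: because $\inner_{k,\vec{\theta}}$ is a product of half-lines, each partial derivative of the convolution picks up only single-coordinate boundary contributions, which is what keeps the cost in $k$ polylogarithmic rather than polynomial.

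Next I would run the standard Lindeberg hybrid. Set $\bY^{(t)} = (\bG_1,\dots,\bG_t,\bU_{t+1},\dots,\bU_n)$ for $t=0,\dots,n$, so that $\bY^{(0)}=\bU$ and $\bY^{(n)}=\bG$, and telescope $\E[\psi^\ast_\lambda(W^T\bU) - \psi^\ast_\lambda(W^T\bG)] = \sum_{i=1}^n \E[\psi^\ast_\lambda(W^T\bY^{(i-1)}) - \psi^\ast_\lambda(W^T\bY^{(i)})]$. For the $i$-th summand, decompose $W^T \bY^{(i-1)} = \bS^{(i)} + W^{(i,:)}\bU_i$ and $W^T\bY^{(i)} = \bS^{(i)} + W^{(i,:)}\bG_i$, where $W^{(i,:)} \in \R^k$ is the $i$-th row of $W$ and $\bS^{(i)}$ is the partial sum over the other indices, independent of both $\bU_i$ and $\bG_i$. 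Apply Fact~\ref{fact:taylor} at $v = \bS^{(i)}$ with $\Delta = W^{(i,:)}\xi$ for $\xi \in \{\bU_i,\bG_i\}$ and take expectations. Since the first three moments of $\bU_i$ and $\bG_i$ both equal $(0,1,0)$, the linear, quadratic, and cubic Taylor terms cancel on subtraction; only the fourth-order remainder survives, bounded per step by $O\big(\sum_{j_1,\dots,j_4}\|\partial_{j_1,\dots,j_4}\psi^\ast_\lambda\|_\infty \cdot |W_{i,j_1}W_{i,j_2}W_{i,j_3}W_{i,j_4}|\big)$, after absorbing the $O(1)$ fourth moments of $\xi$.

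Summing over $i$ and using the $\tau$-regularity of each column together with iterated Cauchy--Schwarz gives $\sum_i |W_{i,j_1}W_{i,j_2}W_{i,j_3}W_{i,j_4}| \le \tau^2$ for every 4-tuple, so the total Lindeberg error is $O(\|\psi^{\ast(4)}_\lambda\|_1 \cdot \tau^2)$. To pass from $\psi^\ast_\lambda$ back to the hard indicator, I would sandwich the indicator of $\inner_{k,\vec{\theta}}$ between a mollifier from above and one from below (both differing from the indicator only on a $\lambda$-strip around the polytope boundary), and control the strip probability under $\normal^n$ via Nazarov's Gaussian surface area estimate, $\Prx_{\bG \leftarrow \normal^n}[W^T \bG \in \strip_{\lambda,k,\vec{\theta}}] = O(\sqrt{\log k}\,\lambda)$. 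The total discrepancy becomes $O(\sqrt{\log k}\,\lambda + \|\psi^{\ast(4)}_\lambda\|_1 \cdot \tau^2 \cdot \polylog)$, where the additional polylog factor comes from a preliminary truncation of the Gaussian at radius $O(\sqrt{\log(1/\tau)})$ that keeps $\max_j |\Delta_j|$ uniformly bounded (so that the Taylor remainder remains applicable) and is the source of the $\log(1/\tau)$ factor in the final bound. Optimizing $\lambda$ then recovers the stated $O((\log k)^{8/5}(\tau \log(1/\tau))^{1/5})$ estimate.

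The main obstacle I anticipate is the joint control of two competing demands on $\psi^\ast_\lambda$: the aggregated fourth-order Taylor error must stay small despite summing $n$ contributions, yet the $\lambda$-strip probability must not overwhelm Nazarov's $\sqrt{\log k}$ surface-area bound. Securing a genuinely $\polylog(k)$ (rather than polynomial-in-$k$) dependence in both $\|\psi^{\ast(s)}_\lambda\|_1$ and in the aggregated fourth-moment budget of $W$ is the technical heart of the argument: the former relies on the product-halfspace structure of $\inner_{k,\vec{\theta}}$ so that derivatives of the Gaussian convolution decouple across coordinates, while the latter relies on the iterated Cauchy--Schwarz bound above to trade the crude per-step estimate $\max_j W_{ij}^4$ for one that is summable over $i$ without accruing a factor of $k$.
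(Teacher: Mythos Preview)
This theorem is not proved in the present paper: it is quoted as a preliminary from \cite{HKM12} (their Theorem~3.1), so there is no in-paper proof to compare against. Your sketch does match the high-level architecture of the \cite{HKM12} argument---Lindeberg replacement, Bentkus mollifier, Nazarov's Gaussian surface area bound---but there is a real gap in the Lindeberg error step.

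The gap is the passage from your per-step remainder $\sum_{j_1,\dots,j_4} \|\partial_{j_1,\dots,j_4}\psi^\ast_\lambda\|_\infty \cdot |W_{i,j_1}\cdots W_{i,j_4}|$ to the claimed total $O(\|\psi^{\ast(4)}_\lambda\|_1 \cdot \tau^2)$. The quantity $\|\psi^{(4)}\|_1$ bounded by Bentkus (Fact~\ref{fact:cor-3.6}) is $\sup_v \sum_{j_1,\dots,j_4}|\partial_{j_1,\dots,j_4}\psi(v)|$, the sup of the sum; what your argument actually needs is $\sum_{j_1,\dots,j_4}\sup_v |\partial_{j_1,\dots,j_4}\psi(v)|$, the sum of sups, and Bentkus's theorem does not control this. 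A tell-tale sign that something has been over-claimed: if your computation went through, optimizing $\lambda$ would yield discrepancy $O((\log k)\,\tau^{2/5})$, strictly stronger in $\tau$ than the stated $(\tau\log(1/\tau))^{1/5}$. The actual \cite{HKM12} proof instead uses the Taylor remainder in the form $\|\psi^{(4)}\|_1\cdot \max_{j}|\Delta_j|^4$ (as in Fact~\ref{fact:taylor}) and runs a \emph{block}-wise rather than coordinate-wise hybrid: it partitions $[n]$ into $1/\tau$ buckets, replaces one bucket at a time, and controls $\E[\max_{j\in[k]}|\Delta_j|^4]$ per bucket via the trick $\max_j a_j \le (\sum_j a_j^p)^{1/p}$ with $p=\Theta(\log k)$ together with hypercontractivity---this is exactly the quantity $h(W,b)$ appearing in Lemma~\ref{lem:HKM4.1} of the present paper. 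The $\log(1/\tau)$ factor in the final bound arises from summing over these $1/\tau$ buckets, not from a Gaussian truncation as you suggest.
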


We will also use the following anti-concentration bound for Gaussian random variables (which is an easy consequence of the $O(\sqrt{\log k})$ Gaussian surface area upper bound of Nazarov \cite{Nazarov:03} for intersections of $k$ LTFs):

\begin{theorem} [Anti-concentration bound for Gaussian random variables landing in a strip, Lemma~3.4 of \cite{HKM12}]
\label{thm:Gaussian-anticoncentration}
For all $\vec{\theta} \in \R^k$ and all $0 < \lambda < 1$, we have
\[
\Prx_{\bG \leftarrow \normal^n}\big[\,W^T \bG \in \strip_{\lambda,k,\vec{\theta}}\,\big] = O(\lambda \sqrt{\log k}).
\]
\end{theorem}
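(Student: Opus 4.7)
The plan is to reduce the desired anti-concentration bound to Nazarov's $O(\sqrt{\log k})$ upper bound on the Gaussian surface area of an intersection of $k$ halfspaces, via a standard ``translate the polytope outward'' argument.

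First, I would identify $\strip_{\lambda,k,\vec{\theta}}$ with the set difference $\inner_{k,\vec{\theta}+\lambda\vec{1}} \setminus \inner_{k,\vec{\theta}}$ (up to a boundary set of Gaussian measure zero). Pulling this back through $W^T$, define
\[ \widetilde{K}_t := \{y \in \R^n : W^j \cdot y \leq \theta_j + t \text{ for all } j \in [k]\} \]
for $t \in \R$, and let $f(t) := \gamma_n(\widetilde{K}_t)$, where $\gamma_n$ denotes the standard Gaussian measure on $\R^n$. Then $\Prx_{\bG \leftarrow \normal^n}[W^T \bG \in \strip_{\lambda,k,\vec{\theta}}] = f(\lambda) - f(0)$.

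Next I would bound $f'(t)$ uniformly in $t$. Each $\widetilde{K}_t$ is an intersection of at most $k$ halfspaces in $\R^n$, so Nazarov's theorem gives $\gamma_+(\widetilde{K}_t) = O(\sqrt{\log k})$ for its Gaussian surface area, with a bound independent of $n$ and $t$. Since each defining hyperplane $\{y : W^j \cdot y = \theta_j + t\}$ has outward unit normal $W^j$ (recall $\|W^j\|=1$), as $t$ increases by an infinitesimal $dt$ every facet shifts outward at unit speed by $dt$. By the standard identification of the derivative of Gaussian volume under outward unit-speed translation with Gaussian surface area (via Minkowski content), $f'(t) = \gamma_+(\widetilde{K}_t) = O(\sqrt{\log k})$.

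Finally, integrating yields $f(\lambda) - f(0) = \int_0^\lambda f'(t)\,dt = O(\lambda \sqrt{\log k})$, which is the claimed bound. The one step that requires some care is justifying $f'(t) = \gamma_+(\widetilde{K}_t)$ for polytopes, whose boundaries are non-smooth along lower-dimensional faces where facets meet; this is standard since ``corner'' sets have codimension $\geq 2$ and therefore contribute only $O(h^2)$ to $f(t+h) - f(t)$, but it is the main technicality in an otherwise routine argument. An alternative (avoiding any smoothness discussion) is to bound $f(\lambda)-f(0)$ directly by a union bound over $j \in [k]$ of the probabilities that $W^j \cdot \bG \in (\theta_j, \theta_j+\lambda]$ while $\bG \in \widetilde{K}_\lambda$, and then apply Cauchy--Schwarz together with Nazarov's surface-area bound to recover $\sqrt{\log k}$ in place of the trivial $k$.
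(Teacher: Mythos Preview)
The paper does not supply its own proof of this statement: it is quoted as Lemma~3.4 of \cite{HKM12}, with only the parenthetical remark that it ``is an easy consequence of the $O(\sqrt{\log k})$ Gaussian surface area upper bound of Nazarov.'' Your proposal fleshes out exactly this hint---identifying the strip with the difference of two nested polytopes, differentiating the Gaussian mass with respect to the outward shift parameter, and invoking Nazarov's bound on $\gamma_+$ of each intermediate polytope---so it is correct and matches the intended route.
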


\section{Our PRG and the statements of our main results} \label{sec:our-PRG}

Our PRG for $(k,s,\tau)$-intersections of LTFs is the generator $\Gen$ described in Section~\ref{sec:structure}, instantiated with the following parameters: 
\begin{align*}
\ell &=  1/\tau, \\
r_\hash &= 2\log k, \\
r_\bucket &= 4 \log k +  O((\log (M/\delta_\CNF))^2 
\end{align*}
where 
\[ M = k\cdot 2^s \quad \text{and} \quad \delta_\CNF = 1/\poly(n) \] 
(the exact value for $\delta_\CNF$ will be specified later).  By standard constructions of $r_\hash$-wise independent hash families and $r_\bucket$-wise independent random variables, the total seed length of our generator is  
\begin{align*}
O( \log(n\log \ell) \cdot r_\hash + \ell \cdot (\log n)\cdot r_\bucket) &= O\left({\frac 1 \tau} \cdot \log n \cdot \left(\log k + s + \log n\right)^2\right)\\
& = \poly(\log n, \log k, s, 1/\tau).
\end{align*}

\subsection{Formal statements of our main results} 
\label{sec:statements} 

We begin with our most general PRG result: 

\begin{reptheorem}{thm:prg-informal}
For all values of $k,s\in \N$ and $\tau \in (0,1)$, the pseudorandom generator $\Gen$ instantiated with the parameters above fools the class of $(k,s,\tau)$-intersections of LTFs to accuracy 
 \begin{equation}  \delta := O((\log k)^{8/5} (\tau \log(1/\tau))^{1/5})) \label{eq:delta-tau}
\end{equation} 
with seed length
$\poly(\log n,\log k, s, 1/\tau)$.  
\end{reptheorem}

Our PRG for the intersections of low-weight LTFs (Theorem~\ref{thm:prg-informal-low-weight}) follows as a consequence of Theorem~\ref{thm:prg-informal} via the following observation:

\begin{observation}[Sparse-or-regular dichotomy] \label{obs:sparse-to-reg}
Let $F(x)=\sign(w \cdot x - \theta)$ be a weight-$t$ LTF.   Then for any $s$, either $F$ is $s$-sparse or $F$ is $({{t}/{\sqrt{s+1}}})$-regular.
\end{observation}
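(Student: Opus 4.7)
The plan is to do a simple case analysis on whether $F$ is $s$-sparse or not, and in the ``not sparse'' case to verify the regularity condition by a direct two-inequality computation. Suppose $F(x) = \sign(w \cdot x - \theta)$ with $w \in \Z^n$ satisfying $|w_i| \le t$ for all $i$, and suppose $F$ is \emph{not} $s$-sparse, so that $w$ has at least $s+1$ nonzero coordinates. Since we measure regularity of an LTF via the (scale-invariant) normalized weight vector $\tilde{w} = w/\|w\|$ --- as per the normalization convention stated at the start of Section~\ref{sec:prelims} --- the goal reduces to proving the scale-invariant inequality
\[
\frac{\sum_{i=1}^n w_i^4}{\|w\|^4} \;\le\; \frac{t^2}{s+1}.
\]

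The two ingredients I would combine are both one-liners. First, since the $w_i$ are integers, every nonzero $w_i$ has $w_i^2 \ge 1$; together with the assumption that at least $s+1$ of the $w_i$ are nonzero, this gives the lower bound $\|w\|^2 = \sum_i w_i^2 \ge s+1$. Second, because $|w_i| \le t$ we have the pointwise estimate $w_i^4 = w_i^2 \cdot w_i^2 \le t^2 w_i^2$, which on summing yields $\sum_i w_i^4 \le t^2 \|w\|^2$. Dividing the second inequality by $\|w\|^4$ and then applying the first inequality to the factor $1/\|w\|^2$ that remains gives exactly the bound $\sum_i w_i^4 / \|w\|^4 \le t^2/\|w\|^2 \le t^2/(s+1)$, which is the $(t/\sqrt{s+1})$-regularity of $\tilde w$, i.e.~of $F$.

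There is no real technical obstacle here; the only subtlety worth flagging is the interpretation of ``$\tau$-regular'' for an LTF with large integer weights. Taken literally, the inequality $\sum_i w_i^4 \le \tau^2\|w\|^2$ is not scale-invariant, and if one applied it naively to the un-normalized integer vector $w$ the observation would fail (e.g.\ $w = (t,1,0,\dots,0)$ already disproves it for $s=1$). Thus the plan hinges on consistently using the normalized form $\sum_i \tilde w_i^4 \le \tau^2$ with $\|\tilde w\|=1$, matching the convention set up in Section~\ref{sec:prelims}; once this is made explicit, the proof is just the two bounds above.
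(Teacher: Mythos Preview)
Your proof is correct and is essentially the same as the paper's: both normalize to a unit-norm weight vector, use the integer lower bound $\|w\|^2 \ge s+1$, and use $|w_i|\le t$ to get $\sum_i w_i^4 \le t^2\|w\|^2$ (the paper phrases this last step via $\max_j u_j^2 \le t^2/\|w\|^2$, but the computation is identical). Your explicit remark about the normalization convention is a helpful clarification but does not change the argument.
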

\begin{proof}
Suppose that $F$ is not $s$-sparse; for notational convenience we may suppose that $w=(w_1,\dots,w_{s'},0,\dots,0)$ where $s' \geq s+1$ and for $1 \leq i \leq s'$ each $w_i$ is a nonzero integer in $\{-t,\ldots,t\}.$  
Normalize the weights by setting $u_i = {w_i} /{\|w\|}$ for $i=1,\dots,n$.  
We have $F(x) = \sign(u \cdot x - \theta/\|w\|)$ where $\|u\|=1$.  

To show that $F$ is $({{t}/{\sqrt{s+1}}})$-regular we must show that $\sum_{i=1}^{s'} u_i^4 \leq {{t^2}/({s+1})}.$
We have
\[
\sum_{i=1}^n u_i^4 \leq \left(\max_{1 \leq j \leq s'} u_j^2\right) \cdot \sum_{i=1}^n u_i^2 = 
\max_{1 \leq j \leq s'} u_j^2 \leq {\frac {t^2}{\|w\|^2}} \leq {\frac {t^2}{s+1}},
\]
where the last inequality holds because ${ {(s+1)}/{\|w\|^2}} \leq { {s'}/{\|w\|^2}} \leq
\sum_{i=1}^{s'} u_i^2 = 1.$ 
\end{proof}

\begin{reptheorem}{thm:prg-informal-low-weight}
For all $k,t\in \N$ and $\delta \in (0,1)$, there is an explicit pseudorandom generator with seed length $\poly(\log n, \log k, t, 1/\delta)$ that $\delta$-fools any intersection of $k$ weight-$t$ LTFs.
\end{reptheorem}

\begin{proof}[Proof of Theorem~\ref{thm:prg-informal-low-weight} assuming Theorem~\ref{thm:prg-informal}]
We fix 
\[ \tau := \tilde{\Theta}\left(\frac{\delta^{5}}{(\log k)^{8}}\right) \] 
so as to satisfy (\ref{eq:delta-tau}).  By Observation~\ref{obs:sparse-to-reg}, we have that every weight-$t$ LTF is either $\tau$-regular or $(s := (t/\tau)^2)$-sparse.  By our choice of $\tau$, the parameters $\ell, r_\hash$, and $r_\bucket$ of the pseudorandom generator $\Gen$ instantiated with our parameters are all bounded by $\poly(\log n,\log k, t, 1/\delta)$, and hence the overall seed length is indeed
\[ O\left(\log(n\log \ell) \cdot r_\hash + \ell \cdot (\log n)\cdot r_\bucket \right)= \poly(\log n,\log k, t, 1/\delta)
\] 
 as claimed. 
\end{proof}

The remainder of this paper will be devoted to proving Theorem~\ref{thm:prg-informal}. 

\section{Fooling the smooth test function $\psi^\ast_{k+1}$} 

An intermediate goal, which in fact takes us most of the way to establishing Theorem~\ref{thm:prg-informal}, is to
show that $\Gen$ fools a particular smooth test function $\psi^\ast_{\lambda,k+1,\blue{(\vec{\theta},0)}}$.  In this section we define this smooth test function, establish some of its basic properties, and formally state our intermediate goal (Theorem~\ref{thm:our-smooth-prg}  below).

\subsection{The smooth test function $\psi^\ast_{\lambda,k+1,\blue{(\vec{\theta},0)}}$ and its basic properties}

As discussed in Section~\ref{sec:our-analysis}, our analysis crucially features a particular smooth function $\psi^\ast_{\lambda,k+1,\blue{(\vec{\theta},0)}}  : \R^{k+1} \to [-1,1]$, which is essentially the $(k+1)$-dimensional version of a function due to Bentkus \cite{Bentkus:90}.  Fact~\ref{fact:cor-3.6} below states the key properties of this function.

\begin{fact} [Main result of \cite{Bentkus:90}, see Theorem~3.5 of \cite{HKM12}] \label{fact:cor-3.6}
For all positive integers $k$, $0 < \lambda < 1$, and $\vec{\theta}\in \R^k$, there exists a smooth function $\psi^\ast_{\lambda,k,\vec{\theta}}: \R^{k} \to[-1,1]$ such that the following holds:  for every $s=1,2,\dots$, we have
$\|(\psi^\ast_{\lambda,k,\vec{\theta}})^{(s)}\|_1 \leq C \log^{s-1}(k+1)/\lambda^s$, and for all $v \in \R^{k}$, we have
\begin{equation} \label{eq:psi-properties}
\psi^\ast_{\lambda,k,\vec{\theta}}(v) = \begin{cases}
-1 & \text{if~} v \in \inner_{k,\vec{\theta}}\\
1  & \text{if~} v \in \outter_{\lambda,k,\vec{\theta}}\\
\in [-1,1] & \text{otherwise (i.e. if~} v \in \strip_{\lambda,k,\vec{\theta}}).
\end{cases}
\end{equation}
\end{fact}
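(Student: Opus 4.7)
The plan is to construct $\psi^\ast_{\lambda,k,\vec\theta}$ explicitly as the composition of a universal one-dimensional smooth step function with a soft-max surrogate for the maximum. The guiding observation is that
\[
\inner_{k,\vec\theta} = \{v \in \R^k : \max_{j\in [k]}(v_j - \theta_j) \leq 0\} \quad\text{and}\quad \outter_{\lambda,k,\vec\theta} = \{v\in\R^k : \max_{j\in[k]}(v_j-\theta_j) \geq \lambda\},
\]
so once $\max$ is replaced by a sufficiently smooth surrogate, only a routine 1D smoothing of the sign function remains.

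Concretely, I would fix once and for all a smooth $\phi\colon \R \to [-1,1]$ with $\phi \equiv -1$ on $(-\infty,0]$, $\phi \equiv +1$ on $[1,\infty)$, and $\|\phi^{(s)}\|_\infty \leq c_s$ for absolute constants $c_s$; such a $\phi$ exists by a standard mollification of the sign function. For a parameter $\beta > 0$ to be chosen, let $M_\beta$ be the log-sum-exp soft-max
\[
M_\beta(u) := \beta^{-1}\log\Bigl(\sum_{i=1}^k e^{\beta u_i}\Bigr),
\]
which satisfies the sandwich $\max_i u_i \leq M_\beta(u) \leq \max_i u_i + \beta^{-1}\log k$. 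Setting $\beta := C\log(k+1)/\lambda$ for a sufficiently large absolute constant $C$, I would define
\[
\psi^\ast_{\lambda,k,\vec\theta}(v) := \phi\Bigl(\tfrac{2}{\lambda}\bigl(M_\beta(v_1-\theta_1,\ldots,v_k-\theta_k) - \tfrac{\lambda}{4}\bigr)\Bigr).
\]
On $\inner_{k,\vec\theta}$ we have $M_\beta \leq \lambda/C \leq \lambda/4$, so the argument of $\phi$ is $\leq 0$ and $\psi^\ast=-1$; on $\outter_{\lambda,k,\vec\theta}$ we have $M_\beta \geq \lambda$, so the argument is $\geq 3/2$ and $\psi^\ast=+1$; and $\psi^\ast \in [-1,1]$ on $\strip_{\lambda,k,\vec\theta}$ because $\phi$ is.

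The real content---and the step I expect to be the main obstacle---is the derivative bound $\|(\psi^\ast_{\lambda,k,\vec\theta})^{(s)}\|_1 \leq C\log^{s-1}(k+1)/\lambda^s$. Applying Fa\`a di Bruno's formula to the composition $\phi \circ \widetilde M$, where $\widetilde M$ denotes the affinely rescaled soft-max, each partial derivative $\partial_{j_1,\ldots,j_s}\psi^\ast$ becomes a sum over set partitions $\pi$ of $[s]$ of products $\phi^{(|\pi|)}(\widetilde M) \cdot \prod_{B\in\pi} \partial_{j_B} \widetilde M$. The crucial estimate to prove is $\|M_\beta^{(s)}\|_1 = O(\beta^{s-1})$ (with hidden constant depending only on $s$): an induction on $s$ shows that each $\partial_{j_1,\ldots,j_s}M_\beta$ equals $\beta^{s-1}$ times a signed combination of joint moments of the probability distribution $p_i \propto e^{\beta u_i}$, and the total $\ell_1$-mass of the resulting coefficients, summed over all index tuples, is controlled by a Bell-number constant (e.g.\ for $s=1$, $\sum_i p_i=1$, and for $s=2$, $\sum_{i,j}|\beta(\delta_{ij}p_j - p_i p_j)| \leq 2\beta$). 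Substituting this into Fa\`a di Bruno and absorbing the $2/\lambda$ rescaling, a partition $\pi$ with $|\pi|=p$ blocks contributes
\[
\|\phi^{(p)}\|_\infty \cdot \prod_{B\in\pi} O\bigl(\beta^{|B|-1}/\lambda\bigr) = O\bigl(\beta^{s-p}/\lambda^p\bigr) = O\bigl(\log^{s-p}(k+1)/\lambda^s\bigr);
\]
the single-block partition $p=1$ dominates and matches the stated bound $\log^{s-1}(k+1)/\lambda^s$, while every finer partition loses a factor of $\log(k+1)$ and is strictly smaller.
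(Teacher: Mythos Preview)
The paper does not prove this statement: it is recorded as a Fact, attributed to Bentkus~\cite{Bentkus:90} (and to Theorem~3.5 of~\cite{HKM12}), and used as a black box. So there is no ``paper's own proof'' to compare against.

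Your construction is correct, and it is essentially the standard one underlying the Bentkus result: smooth the hard maximum by the log-sum-exp $M_\beta$, then compose with a fixed one-dimensional mollifier $\phi$. The verification of the three cases in~(\ref{eq:psi-properties}) is clean. For the derivative bound, your key estimate $\|M_\beta^{(s)}\|_1 = O_s(\beta^{s-1})$ is exactly right; it follows because the $s$-th partials of $M_\beta$ are $\beta^{s-1}$ times joint cumulants of the softmax distribution $(p_i)$, and these cumulants have total variation bounded by a constant depending only on $s$ (your $s=1,2$ computations are the base cases of a straightforward induction). The Fa\`a di Bruno bookkeeping is also correct: for a partition into $p$ blocks the contribution scales like $\beta^{s-p}/\lambda^p = \log^{s-p}(k+1)/\lambda^s$, and the coarsest partition $p=1$ gives the claimed $\log^{s-1}(k+1)/\lambda^s$.

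One minor remark: as written, your constant $C$ in the final bound implicitly depends on $s$ (through the Bell-number count of partitions and the constants $c_s = \|\phi^{(s)}\|_\infty$). The paper only ever uses $s \le 4$, so this is harmless here, but if the Fact is read as asserting a single universal $C$ for all $s$ simultaneously, a small additional argument (or a more careful choice of $\phi$) would be needed.
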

For intuition, the test function $\psi^\ast_{\lambda,k,\vec{0}}: \R^{k} \to [-1,1]$ may loosely be thought of as a smooth approximation to the $k$-variable {\sc And} function; recall that on input $(b_1,\dots,b_k) \in \bits^k$, the {\sc And} function outputs $-1$ iff $(b_1,\dots,b_k) = (-1,\dots,-1).$  (We note that \cite{HKM12} only require the $s=4$ case of the above theorem (this is their Theorem~3.5), since in their framework they can obtain perfect cancellation of the first, second and third derivative terms in the relevant difference of Taylor expansions.  In contrast we need to use all of the $s=1,2,3,4$ cases.)

As mentioned earlier, in our analysis of $\psi^\ast_{\lambda,k+1,\blue{(\vec{\theta},0)}}$ the \blue{last} argument will always receive a Boolean value from $\bits$ (corresponding to the output of the CNF $G$).  We will use the following simple claim to control the behavior of $\psi^\ast_{\lambda,k+1,\blue{(\vec{\theta},0)}}$ on inputs of this sort:

\begin{claim} \label{claim:psi-extra-coordinate}
Given $0<\lambda<1,k \geq 1$, and $\vec{\theta}\in \R^k$, let $v \in \R^k$ be such that $v \notin \strip_{\lambda,k, \vec{\theta}}$.  Then both vectors $\blue{(v,-1)} \in \R^{k+1}$ and
$\blue{(v,1)} \in \R^{k+1}$ lie outside of $\strip_{\lambda,k+1,\blue{(\vec{\theta},0)}}$.
\end{claim}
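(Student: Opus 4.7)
The plan is to unpack the three set definitions and do a short case analysis on where $v$ lies. Since $v \notin \strip_{\lambda,k,\vec{\theta}}$, we have $v \in \inner_{k,\vec{\theta}} \cup \outter_{\lambda,k,\vec{\theta}}$, so either (i) $v_j \le \theta_j$ for every $j \in [k]$, or (ii) there exists some $j^\ast \in [k]$ with $v_{j^\ast} \ge \theta_{j^\ast} + \lambda$. I would then verify, for each of the two extended points $(v,-1)$ and $(v,1)$, that it lands in $\inner_{k+1,(\vec{\theta},0)}$ or in $\outter_{\lambda,k+1,(\vec{\theta},0)}$.

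In case (i), the point $(v,-1)$ satisfies $(v,-1)_j \le \theta_j$ for $j \in [k]$ and $(v,-1)_{k+1} = -1 \le 0$, so it lies in $\inner_{k+1,(\vec{\theta},0)}$. The point $(v,1)$ satisfies $(v,1)_{k+1} = 1 \ge 0 + \lambda$ (using $\lambda < 1$), and so it lies in $\outter_{\lambda,k+1,(\vec{\theta},0)}$ witnessed by coordinate $k+1$. In case (ii), the witness coordinate $j^\ast \in [k]$ for $v \in \outter_{\lambda,k,\vec{\theta}}$ remains a witness for both $(v,-1)$ and $(v,1)$ under the extended threshold vector $(\vec{\theta},0)$, so both extended points lie in $\outter_{\lambda,k+1,(\vec{\theta},0)}$. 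In every case, both $(v,-1)$ and $(v,1)$ lie in $\inner_{k+1,(\vec{\theta},0)} \cup \outter_{\lambda,k+1,(\vec{\theta},0)}$, and hence outside of $\strip_{\lambda,k+1,(\vec{\theta},0)}$, as desired.

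There is essentially no obstacle here; the only subtlety to record carefully is the role of the hypothesis $0 < \lambda < 1$, which is used (and only used) to guarantee that the Boolean value $+1$ in the $(k+1)$-st coordinate exceeds $0 + \lambda$, so that it triggers the $\outter$ condition with respect to the threshold $0$ appended to $\vec{\theta}$. This is why the choice of $0$ as the appended threshold (paired with $\pm 1$-valued CNF outputs in the last coordinate) is the right one: both Boolean values avoid the strip around $0$ of width $\lambda < 1$.
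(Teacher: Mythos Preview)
Your proof is correct and follows essentially the same case analysis as the paper's own proof: handle $v \in \outter_{\lambda,k,\vec{\theta}}$ by reusing the witnessing coordinate, and handle $v \in \inner_{k,\vec{\theta}}$ by checking that $(v,-1)$ lands in $\inner_{k+1,(\vec{\theta},0)}$ while $(v,1)$ lands in $\outter_{\lambda,k+1,(\vec{\theta},0)}$ via the last coordinate since $1 > \lambda$. Your added remark isolating where the hypothesis $\lambda < 1$ is used is a nice touch not made explicit in the paper.
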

\begin{proof}
If $v \in \outter_{\lambda,k, \vec{\theta}}$ (because of some coordinate $v_j \geq \theta_j + \lambda$), then it is clear that $\blue{(v,1)}$ and $\blue{(v,-1)}$ both lie in $\outter_{\lambda,k+1,\blue{(\vec{\theta},0)}}$ (because of the same coordinate).  So suppose that $v \in \inner_{k,\vec{\theta}}.$  The vector $\blue{(v,1)}$ lies in $\outter_{\lambda,k+1,\blue{(\vec{\theta},0)}}$ (because of the \blue{last} coordinate $1>\lambda$), and the vector $\blue{(v,-1)}$ is easily seen to lie in $\inner_{k+1, \blue{(\vec{\theta},0)}}$.
\end{proof}

\subsection{Towards Theorem~\ref{thm:prg-informal}:  fooling the test function $\psi^\ast_{\lambda,k+1,\blue{(\vec{\theta},0)}}$}
As an intermediate step towards Theorem~\ref{thm:prg-informal} we will first establish the following ``pseudorandom generator'' for the smooth function $\psi^\ast_{\lambda,k+1,\blue{(\vec{\theta},0)}}$:

\begin{theorem}[$\Gen$ fools the smooth test function $\psi^\ast_{\lambda,k+1,\blue{(\vec{\theta},0)}}$]
\label{thm:our-smooth-prg} 
Let $\blue{H \wedge G}$ be a $(k,s,\tau)$-$\CNFLTF$, and let $W \in \R^{n\times k}$  be the matrix of weight vectors (each of norm 1) of the $\tau$-regular LTFs that comprise $H$, and $\vec{\theta}\in \R^k$ be the vector of their thresholds (so $\sign(W^j \cdot x - \theta_j)$ is the $j$-th LTF). For $0 <  \lambda < 1$, let $\psi^\ast_{\lambda,k+1,\blue{(\vec{\theta},0)}}: \R^{k+1}\to [-1,1]$ be as described in Fact~\ref{fact:cor-3.6}.  Then  when $\Gen$ is instantiated with the parameters from Section~\ref{sec:our-PRG},
\begin{align}
&\left| \Ex_{\bY\leftarrow\Gen}\big[ \psi^\ast_{\lambda,k+1,\blue{(\vec{\theta},0)}}(W^T \bY,\blue{G(\bY)} ) \big]  - \Ex_{\bU\leftarrow\bits^n}\big[ \psi^\ast_{\lambda,k+1,\blue{(\vec{\theta},0)}}(W^T \bU,\blue{G(\bU)}) \big]  \right| \nonumber \\
&=O\left(
{\frac {(\log k)^3}{\lambda^4}}  \left((\log k)^3 \cdot \tau \log(1/\tau) + {\frac 1 \tau} \cdot \delta_\CNF \cdot n^2\right) + {\frac 1 \tau}  \left(\sqrt{\delta_\CNF} +   \sum_{a=1}^3 n^a\sqrt{\delta_\CNF} \cdot \frac{(\log k)^{a-1}}{\lambda^a}\right)
\right).\label{eq:smooth-inv-cnfltf}
\end{align} 
\end{theorem}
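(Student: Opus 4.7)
The plan is to follow the Lindeberg-style hybrid argument of \cite{HKM12} bucket by bucket, augmented by a new coupling device that lets us apply multidimensional Taylor's theorem to $\psi^\ast := \psi^\ast_{\lambda,k+1,(\vec{\theta},0)}$ despite its distinguished last coordinate being $\bits$-valued. Condition on a hash function $h \in \calH$ and define hybrids $Z^{(0)},\ldots,Z^{(\ell)}$, where $Z^{(t)}$ is pseudorandom on the first $t$ buckets and uniform on the remaining $\ell-t$; then $Z^{(0)}$ has the law of $\bU\leftarrow\bits^n$ and $Z^{(\ell)}$ the law of $\bY\leftarrow\Gen$. It suffices to bound the single-step change $\bigl|\Ex[\psi^\ast(W^T Z^{(t-1)}, G(Z^{(t-1)}))] - \Ex[\psi^\ast(W^T Z^{(t)}, G(Z^{(t)}))]\bigr|$ and sum over $t\in[\ell]$. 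The expectation over $\bh\leftarrow\calH$ is taken at the end; a standard $(2\log k)$-wise-independent hashing analysis (as in \cite{HKM12}) shows that the ``unevenly spread'' event fails with probability $O(\delta_\CNF\, n^2)$, and its contribution (multiplied by the worst-case per-bucket swap cost $\|\psi^{\ast(4)}\|_1 = O((\log k)^3/\lambda^4)$, summed across $\ell = 1/\tau$ buckets) yields the $((\log k)^3/\lambda^4)(1/\tau)\delta_\CNF n^2$ summand in (\ref{eq:smooth-inv-cnfltf}).

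Fix $t$ and set $B = h^{-1}(t)$. Outside $B$ the two hybrids agree; on $B$, $Z^{(t-1)}|_B \sim \mu^\unif_B$ is uniform while $Z^{(t)}|_B \sim \mu^\pseudo_B$ is $r_\bucket$-wise independent with $r_\bucket = 4\log k + O(\log^2(M/\delta_\CNF))$. Fix any restriction $\rho$ of the coordinates outside $B$; by Theorem~\ref{thm:BR} the distribution $\mu^\pseudo_B$ $\delta_\CNF$-fools the restricted CNF $G\uhr\rho$. Hence there is a coupling $\kappa=\kappa(\rho)$ of $\mu^\unif_B$ with $\mu^\pseudo_B$ whose probability of disagreement on the output of $G\uhr\rho$ is $O(\delta_\CNF)$. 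Decompose $\kappa = p_+\kappa_+ + p_-\kappa_- + p_\mathrm{bad}\kappa_\mathrm{bad}$, where $\kappa_\pm$ is $\kappa$ conditioned on the event ``both outputs of $G\uhr\rho$ equal $\pm 1$'' and $p_\mathrm{bad} = O(\delta_\CNF)$. The crucial property is that under $\kappa_+$ and under $\kappa_-$, the $(k+1)$-st coordinate of both $\bDelta^\unif := (W_B^T \bU_B, 0)$ and $\bDelta^\pseudo := (W_B^T \bX_B, 0)$ is identically zero, so Taylor's theorem is meaningful for both.

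Under $\kappa_\pm$, write $\bv = (W^T_{[n]\setminus B} Z_{[n]\setminus B}, \pm 1)$ and Taylor-expand $\psi^\ast(\bv+\bDelta^\unif)$ and $\psi^\ast(\bv+\bDelta^\pseudo)$ around $\bv$ through order three, with fourth-order remainder $\err$ (Fact~\ref{fact:taylor}). Zeroth-order terms cancel identically. For each $a\in\{1,2,3\}$ we must bound $\bigl|\Ex_{\kappa_\pm}\bigl[(\partial_{j_1,\ldots,j_a}\psi^\ast)(\bv)\bigl(\bDelta^\unif_{j_1}\cdots\bDelta^\unif_{j_a} - \bDelta^\pseudo_{j_1}\cdots\bDelta^\pseudo_{j_a}\bigr)\bigr]\bigr|$ summed over $(j_1,\ldots,j_a)\in[k+1]^a$; the HKM analysis uses independence of $\bv$ from $\bDelta$ at this step, but conditioning on the CNF output destroys independence. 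The fix is to expand each linear form $W^{j_c}_B\cdot x = \sum_{i\in B} W^{j_c}_i x_i$ into at most $n$ coordinate monomials, producing at most $n^a$ mixed moments of the form $\Ex_{\bX_B}\bigl[x_{i_1}\cdots x_{i_a}\cdot \mathbf{1}[(G\uhr\rho)(\bX_B)=\pm 1]\bigr]$. Each such quantity is, up to sign, the expectation of a depth-$2$ formula of size at most $M+a$, hence is $\delta_\CNF$-fooled by $\mu^\pseudo_B$ via Theorem~\ref{thm:BR}. Combined with $\|\psi^{\ast(a)}\|_1 = O((\log k)^{a-1}/\lambda^a)$ from Fact~\ref{fact:cor-3.6}, normalization by $p_\pm$ (which in the worst case is as small as $\sqrt{\delta_\CNF}$, turning $\delta_\CNF$ into $\sqrt{\delta_\CNF}$; smaller $p_\pm$ is absorbed into the bad component), and summed over $\ell=1/\tau$ buckets, this contributes the $(1/\tau)\sum_{a=1}^{3} n^a\sqrt{\delta_\CNF}(\log k)^{a-1}/\lambda^a$ term of (\ref{eq:smooth-inv-cnfltf}).

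Finally, the fourth-order remainder satisfies $|\err(\bv,\bDelta)|\le \|\psi^{\ast(4)}\|_1\cdot\max_j|\bDelta_j|^4$. Under $\kappa_\pm$ the $(k+1)$-st coordinate of $\bDelta$ vanishes, so the maximum is over $j\in[k]$ and the bound reduces to exactly the fourth-moment quantity handled in \cite{HKM12} using $\tau$-regularity of $W$; using $\|\psi^{\ast(4)}\|_1 = O((\log k)^3/\lambda^4)$ this contributes $O(((\log k)^3/\lambda^4)\cdot(\log k)^3\,\tau\log(1/\tau))$ in total over the $\ell=1/\tau$ buckets. The $\kappa_\mathrm{bad}$ component contributes at most $p_\mathrm{bad}\cdot\|\psi^\ast\|_\infty = O(\sqrt{\delta_\CNF})$ per bucket (again after absorbing the small-$p_\pm$ regime), yielding $O((1/\tau)\sqrt{\delta_\CNF})$. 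Summing the four contributions produces (\ref{eq:smooth-inv-cnfltf}). The main obstacle, flagged in Section~\ref{sec:our-analysis}, is the tension between (i) needing the coupling to keep the last coordinate of $\bDelta$ small so Taylor's theorem applies and (ii) the coupling destroying independence between $\bv$ and $\bDelta$; the resolution --- invoking Bazzi--Razborov not only on $G$ itself but on the $O(n^a)$ slightly-augmented depth-$2$ formulas arising from expanding monomial moments of the regular linear forms --- is the most delicate bookkeeping step and is precisely what produces the $n^a$ factors in the final bound.
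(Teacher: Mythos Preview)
Your approach is essentially the paper's: same hybrid, same Bazzi--Razborov coupling, same decomposition into conditional pieces $\kappa_{\pm}$ and a $\delta_\CNF$-weight bad piece, same Taylor expansion with the augmented-depth-2 trick for orders $1$--$3$, same reduction of the fourth-order remainder to the HKM computation. The $\sqrt{\delta_\CNF}$ threshold on $p_\pm$ and the $n^a$ factors are all accounted for correctly.

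There is one misattribution you should fix. The $((\log k)^3/\lambda^4)\cdot(1/\tau)\cdot\delta_\CNF\, n^2$ term does \emph{not} come from a hashing ``unevenly spread'' bad event; no such event appears in this argument, and the $(2\log k)$-wise hash analysis is used only as an expectation bound (Lemma~4.1 of \cite{HKM12}) on $\sum_b h(W,b)$, which is what produces the $(\log k)^3\tau\log(1/\tau)$ piece. The $\delta_\CNF\,n^2$ piece instead arises inside the fourth-order error analysis: when you undo the conditioning and pass from the conditional distributions $(\hat{\bU}^c,\hat{\bZ}^c)$ back to $(\hat{\bU},\hat{\bZ})$ via the mixture decomposition, you pick up a $\pm 2\delta_\CNF\cdot\|f\|_\infty$ correction with $f(x,z)=\max_{j\in[k]}|\Delta^\pseudo(x,z)_j|^4$, and $\|f\|_\infty\le (\sqrt{n})^4=n^2$ since each $W^j$ has unit norm. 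Multiplying by $\|\psi^{\ast(4)}\|_1=O((\log k)^3/\lambda^4)$ and summing over $\ell=1/\tau$ buckets gives the term. Your error-term paragraph should include this correction; as written it accounts only for the HKM fourth-moment contribution and omits the step where the $\delta_\CNF n^2$ actually enters.
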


\section{Setup for our coupling-based hybrid argument}
We begin by defining the sequence of random variables that we will use to hybridize between $\bY \leftarrow \Gen$, the $n$-bit pseudorandom input, and $\bU$, the $n$-bit uniform random input. 

\begin{definition}[Hybrid random variables] 
For any index $b \in \{0,1,\ldots,\ell\}$ and any hash $h \colon [n] \to [\ell]$, we define the hybrid random variable $\bX^{h,b}$ over $\bits^n$ as follows:  Independently across each $c \in [\ell]$,
\begin{itemize}

\item If $c > b$, then the coordinates $\bX^{h,b}_{h^{-1}(c)}$ of $\bX^{h,b}$ are distributed according to a uniform random draw from $\bits^{n}$;

\item If $c \leq b$, then the coordinates $\bX^{h,b}_{h^{-1}(c)}$ of $\bX^{h,b}$ are distributed according to a draw from an $r_\bucket$-wise independent random variable over $\bits^n$. 

\end{itemize}

Let $\calH$ be a $(2\log k)$-wise independent family of hashes $h : [n] \to [\ell]$.  For each $b \in \{0,1,\dots,\ell\}$, the hybrid random variable $\bX^{\bh,b}$ is defined by drawing $\bh \leftarrow \calH$ and then taking $\bX^{\bh,b}$ as above.  

\begin{remark}
\label{rem:first-and-last} 
 Note that $\bX^{\bh,0}$ is a uniform random variable over $\bits^n$ (indeed $\bX^{h,0}$ is uniform for every fixed hash $h$),
while $\bX^{\bh,\ell}$ is distributed according to $\Gen$. 
\end{remark} 
\end{definition}

\subsection{Coupling adjacent random variables in the hybrid argument} 

Fix a hash $h \colon [n] \to [\ell]$, a bucket $b \in [\ell]$, and a restriction $\rho \in \bits^{[n] \setminus h^{-1}(b)}$ fixing the variables outside bucket $h^{-1}(b)$. Recall that $\bX^{h,b-1}$ is distributed according to the uniform distribution within $h^{-1}(b)$, and $\bX^{h,b}$ is distributed according to a $r_\bucket$-wise independent distribution within this same bucket $h^{-1}(b)$.
For the remainder of this paper, for notational clarity unless otherwise indicated $\bU$ denotes a uniformly distributed random variable over $\bits^{h^{-1}(b)}$ and $\bZ$ denotes a $r_\bucket$-wise independent random variable over $\bits^{h^{-1}(b)}$.

\paragraph{Our CNF-fooling-based coupling.}

By the results of Bazzi and Razborov (Theorem~\ref{thm:BR}) and the choice of $r_\bucket$ from Section~\ref{sec:our-PRG}, the random variable  $\bZ$ $\delta_\CNF$-fools $G \uhr \rho$ (which, like $G$, is an $M$-clause CNF).  Consequently there exists a coupling $(\hat{\bU}, \hat{\bZ})$ between $\bU$ and $\bZ$ such that 
\begin{equation}
\Prx_{(\hat{\bU},\hat{\bZ})}\big[ (G\uhr \rho)(\hat{\bU}) \ne (G\uhr \rho)(\hat{\bZ}) \big] \le \delta_\CNF. \label{eq:coupling}
\end{equation} 
(Note that this coupling depends on $G\uhr \rho$.)

Consider the following joint distribution over a pair of random variables $(\hat{\bX}^{h,b-1}(\rho),\hat{\bX}^{h,b}(\rho))$, both supported on $\bn$:  First make a draw $(\hat{U},\hat{Y}) \leftarrow (\hat{\bU},\hat{\bZ})$,  and output $(\hat{X}^{h,b-1}(\rho),\hat{X}^{h,b}(\rho))$ where  
\begin{itemize}
\item $\hat{X}^{h,b-1}(\rho)$ assigns variables according to $\hat{U}$ within $h^{-1}(b)$, and according to $\rho$ outside $h^{-1}(b)$.
\item $\hat{X}^{h,b}(\rho)$ assigns variables according to $\hat{Y}$ within $h^{-1}(b)$, and according to $\rho$ outside $h^{-1}(b)$. 
\end{itemize} 

\begin{remark} 
\label{rem:average} 
Note that for $\brho \leftarrow \bX^{h,b}_{[n] \setminus h^{-1}(b)}$, we have that $\hat{\bX}^{h,b-1}(\brho)$ is distributed identically as $\bX^{h,b-1}$ and likewise $\hat{\bX}^{h,b}(\brho)$ is distributed identically as $\bX^{h,b}$. 
\end{remark}

\section{The hybrid argument: Proof of Theorem~\ref{thm:our-smooth-prg}}

Throughout this section for notational clarity we simply write $\psi$ instead of $\psi^\ast_{\lambda,k+1,\blue{(\vec{\theta},0)}}$. We also write $F_\psi : \bn \to [-1,1]$ to denote the function
\[ F_\psi(x) = \psi(W^T x,\blue{G(x)}). \] 
 
Our core technical result, which we prove in Section~\ref{sec:one-step}, is the following: 

\begin{lemma}[Error incurred in one step of hybrid] 
\label{lem:one-bucket-hybrid} 
For all hashes $h : [n] \to [\ell]$, buckets $b \in [\ell]$, and restrictions $\rho \in \bits^{[n]\setminus h^{-1}(b)}$, we have that 
\begin{align}
& \big| \E\big[ F_\psi(\hat{\bX}^{h,b-1}(\rho)) \big]-\E\big[ F_\psi(\hat{\bX}^{h,b}(\rho)) \big]\big| \label{eq:fixedhrho} \\
&= O\left( {\frac {(\log k)^3}{\lambda^4}}  \left((\log k)^2 \cdot h(W,b) + \delta_\CNF \cdot n^2 \right) + \sqrt{\delta_\CNF}+   \sum_{a=1}^3 n^a\sqrt{\delta_\CNF} \cdot \frac{(\log k)^{a-1}}{\lambda^a}\right), \nonumber 
\end{align} 
where 
\[ h(W,b) := \left(\blue{\sum_{j=1}^{k}} \| W^{j}_{h^{-1}(b)}\|^{4\log k} \right)^{1/\log k}. \]  
\end{lemma}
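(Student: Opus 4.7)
The plan is to construct coupled $(k{+}1)$-dimensional random vectors $\bv,\bDelta^\unif,\bDelta^\pseudo$ with $F_\psi(\hat{\bX}^{h,b-1}(\rho))=\psi(\bv+\bDelta^\unif)$ and $F_\psi(\hat{\bX}^{h,b}(\rho))=\psi(\bv+\bDelta^\pseudo)$, and then to run a Lindeberg-style Taylor expansion that exploits the structural dichotomy indicated in Section~\ref{sec:new-ingredients}. Writing $B:=h^{-1}(b)$ and $v:=W^T_{[n]\setminus B}\rho\in\R^k$ for the contribution to $W^T x$ coming from the fixed coordinates, I would use the coupling $(\hat{\bU},\hat{\bZ})$ from~\eqref{eq:coupling} and set
\[
\bv:=(v,(G\uhr\rho)(\hat{\bU})),\quad \bDelta^\unif:=(W_B^T\hat{\bU},0),\quad \bDelta^\pseudo:=(W_B^T\hat{\bZ},(G\uhr\rho)(\hat{\bZ})-(G\uhr\rho)(\hat{\bU})).
\]
The crucial property this arrangement buys is that $\bDelta^\unif_{k+1}\equiv 0$ and $\bDelta^\pseudo_{k+1}$ vanishes off the event $A_\neq:=\{(G\uhr\rho)(\hat{\bU})\neq (G\uhr\rho)(\hat{\bZ})\}$, which has probability at most $\delta_\CNF$.

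I would then partition the sample space as $A_+\sqcup A_-\sqcup A_\neq$, with $A_\sigma:=\{(G\uhr\rho)(\hat{\bU})=(G\uhr\rho)(\hat{\bZ})=\sigma\}$. The $A_\neq$ contribution to~\eqref{eq:fixedhrho} is at most $2\delta_\CNF$ by the trivial bound $|\psi|\le 1$. On each $A_\sigma$, the random vector $\bv$ \emph{collapses to the deterministic point $(v,\sigma)$}, restoring the independence needed to apply Fact~\ref{fact:taylor}; together with the vanishing of $\bDelta_{k+1}$ on $A_\sigma$, this yields
\begin{align*}
&\Ex\!\big[\mathbf{1}[A_\sigma]\big(\psi(\bv+\bDelta^\unif)-\psi(\bv+\bDelta^\pseudo)\big)\big]\\
&\qquad=\sum_{s=1}^{3}\sum_{(j_1,\ldots,j_s)\in[k]^s}\tfrac{(\partial_{j_1,\ldots,j_s}\psi)(v,\sigma)}{(j_1,\ldots,j_s)!}\, M_{s,\vec j}(\sigma)+\Ex\!\big[\mathbf{1}[A_\sigma](\err^\unif-\err^\pseudo)\big],
\end{align*}
where $M_{s,\vec j}(\sigma):=\Ex[\mathbf{1}[A_\sigma]\prod_l\bDelta^\unif_{j_l}]-\Ex[\mathbf{1}[A_\sigma]\prod_l\bDelta^\pseudo_{j_l}]$.

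The technical heart of the argument is bounding each $M_{s,\vec j}(\sigma)$ for $s\in\{1,2,3\}$. I would first decouple the joint event by writing
\[
\Ex\!\big[\mathbf{1}[A_\sigma]\textstyle\prod_l \bDelta^\unif_{j_l}\big]=\Ex_{\bU}\!\big[\chi_\sigma(\bU)\textstyle\prod_l (W_B^{j_l}\!\cdot\bU)\big]-\Ex\!\big[\mathbf{1}[A_\neq,\chi_\sigma(\hat{\bU})=1]\textstyle\prod_l \bDelta^\unif_{j_l}\big],
\]
where $\chi_\sigma(u):=\mathbf{1}[(G\uhr\rho)(u)=\sigma]$, with an analogous formula for $\bDelta^\pseudo$ via $\hat{\bZ}$; the $A_\neq$ error here is handled by Cauchy--Schwarz together with the deterministic bound $|W_B^j\cdot u|\le\sqrt{n}$, contributing $n^s\sqrt{\delta_\CNF}$ per tuple. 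For the remaining marginal difference, I would expand
\[
\textstyle\prod_l (W_B^{j_l}\!\cdot u)\,\chi_\sigma(u)=\sum_{\vec i\in B^s}\sum_{\vec b\in\{\pm1\}^s}\big(\textstyle\prod_l b_l\,W^{j_l}_{B,i_l}\big)\,\xi_{\vec i,\vec b,\sigma}(u),
\]
where $\xi_{\vec i,\vec b,\sigma}(u):=\mathbf{1}[\bigwedge_l u_{i_l}=b_l\wedge (G\uhr\rho)(u)=\sigma]$. The key point is that each $\xi_{\vec i,\vec b,\sigma}$ is a \emph{slightly augmented} depth-$2$ circuit: an $(M{+}s)$-clause CNF if $\sigma=-1$ and an $M$-term DNF (each term conjoined with up to $s$ additional literals) if $\sigma=+1$. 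By Theorem~\ref{thm:BR} and our choice of $r_\bucket\ge C(\log(M/\delta_\CNF))^2$, each is $\delta_\CNF$-fooled by $\bZ$; summing against the weights (crudely, $\sum_i|W^j_{B,i}|\le n$) yields an $O(n^s\delta_\CNF)$ marginal error, subdominant to the $A_\neq$ error. Combining with $\|\psi^{(s)}\|_1\le C(\log k)^{s-1}/\lambda^s$ (Fact~\ref{fact:cor-3.6}) and summing over $s\in\{1,2,3\}$ produces the claimed $\sum_{a=1}^3 n^a\sqrt{\delta_\CNF}\,(\log k)^{a-1}/\lambda^a$ contribution.

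For the fourth-order remainder I would bound $|\err|\le\|\psi^{(4)}\|_1\cdot\max_{j\in[k]}|\bDelta_j|^4$ and split across cases: on $A_\sigma$, since $r_\bucket\ge 4\log k$ makes both $\hat{\bU}$ and $\hat{\bZ}$ at least $(4\log k)$-wise independent, a Khintchine-type inequality combined with H\"older (following the treatment in Section~4 of \cite{HKM12}) lets the $\tau$-regularity of $W$ bound $\Ex[\max_{j\in[k]}|W_B^j\cdot\hat{\bU}|^4]$ (and analogously for $\hat{\bZ}$) by $(\log k)^2\,h(W,b)$; on $A_\neq$, the crude $|\bDelta_j|^4\le n^2$ together with $\Pr[A_\neq]\le\delta_\CNF$ gives the $\delta_\CNF\cdot n^2$ term. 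Multiplying by $\|\psi^{(4)}\|_1=O((\log k)^3/\lambda^4)$ and collecting all contributions yields~\eqref{eq:fixedhrho}. I anticipate the main obstacle to be the moment-matching step: one must carefully verify that the signed decomposition representing $\prod_l u_{i_l}\cdot\chi_\sigma(u)$ is indeed fooled by $r_\bucket$-wise independence with the right dependence on $(s,M,\delta_\CNF)$, and that the coupling-induced $A_\neq$ errors compose cleanly via Cauchy--Schwarz at each moment order. This coupling-plus-conditioning mechanism---absent from the prior Lindeberg-style analyses in \cite{GOWZ10,MZ13prg,HKM12}---is what sidesteps both the ``$\bDelta_{k+1}=\pm 2$'' jump problem from the Boolean coordinate and the loss of independence between $\bv$ and $\bDelta$ that would otherwise spoil the standard moment cancellations.
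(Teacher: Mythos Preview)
Your proposal is correct and follows essentially the same architecture as the paper's proof: the same coupling $(\hat{\bU},\hat{\bZ})$, the same choice of $\bv,\bDelta^\unif,\bDelta^\pseudo$ (so that the $(k{+}1)$-st coordinate of $\bDelta$ vanishes off the rare event), the same conditioning on the value of $G\uhr\rho$ to restore independence between $\bv$ and $\bDelta$, the same appeal to Bazzi--Razborov for the ``slightly augmented'' depth-2 circuits arising in the moment comparison, and the same HKM hypercontractivity bound for the fourth-order remainder.

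There is one genuine but minor technical divergence in how the $\sqrt{\delta_\CNF}$ factors arise. The paper first case-splits on whether $\pi_c:=\Pr[(G\uhr\rho)(\bU)=c]$ exceeds $\sqrt{\delta_\CNF}$; if not, the trivial bound $2\pi_c$ gives the standalone $\sqrt{\delta_\CNF}$ term, and if so, the paper compares the \emph{conditional} probabilities $p_\unif,p_\pseudo$ and obtains $|p_\unif-p_\pseudo|=O(\delta_\CNF/\pi_c)=O(\sqrt{\delta_\CNF})$ by dividing through by the denominator. You instead keep the $\pi_c$ weight inside as the indicator $\mathbf{1}[A_\sigma]$, decouple the joint event into marginal events plus an $A_\neq$ correction, and bound the correction by Cauchy--Schwarz. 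Both routes land on the stated bound, but note that your route is in fact slightly wasteful: the $A_\neq$ correction $\E[\mathbf{1}[A_\neq]\prod_l\bDelta_{j_l}]$ can be bounded by the sup norm $\Pr[A_\neq]\cdot n^{s/2}\le\delta_\CNF\cdot n^{s/2}$ rather than Cauchy--Schwarz, and your marginal difference is already $O(n^s\delta_\CNF)$, so your argument done sharply would yield $\delta_\CNF$ rather than $\sqrt{\delta_\CNF}$ throughout. This does not matter for the lemma as stated or for the downstream application, but it is worth being aware that the Cauchy--Schwarz step is not where the real loss lives.
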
 

The following corollary follows as an immediate consequence of Lemma~\ref{lem:one-bucket-hybrid}, Remark~\ref{rem:average}, and the triangle inequality:  

\begin{corollary}[Averaging Lemma~\ref{lem:one-bucket-hybrid} over $\brho$ and summing over {$b \in [\ell]$}] 
\label{cor:average-sum} 
For all hashes $h : [n] \to [\ell]$, we have that 
\begin{align*}
\big| \E\big[ F_\psi(\bX^{h,0}) \big]-\E\big[ F_\psi(\bX^{h,\ell}) \big]\big| &=  {\frac {O((\log k)^3)}{\lambda^4}} \cdot (\log k)^2 \cdot \sum_{b=1}^{\ell} h(W,b) \\
& \ + \ell \cdot O\left(  {\frac {(\log k)^3}{\lambda^4}} \cdot \delta_\CNF \cdot n^2  + \sqrt{\delta_\CNF} +   \sum_{a=1}^3 n^a\sqrt{\delta_\CNF} \cdot \frac{(\log k)^{a-1}}{\lambda^a}\right).
\end{align*} 
\end{corollary}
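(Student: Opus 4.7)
The plan is to derive Corollary~\ref{cor:average-sum} as a direct consequence of Lemma~\ref{lem:one-bucket-hybrid} via three routine steps: (i) a telescoping of the $\ell+1$ hybrids together with the triangle inequality, (ii) Remark~\ref{rem:average} to realize $\bX^{h,b-1}$ and $\bX^{h,b}$ as averages of the coupled pair $(\hat{\bX}^{h,b-1}(\brho), \hat{\bX}^{h,b}(\brho))$ over a common $\brho$, and finally (iii) summation of the per-bucket bound from Lemma~\ref{lem:one-bucket-hybrid} over $b \in [\ell]$.

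Concretely, I would first write the telescoping identity
\[
\E\big[F_\psi(\bX^{h,0})\big] - \E\big[F_\psi(\bX^{h,\ell})\big] \;=\; \sum_{b=1}^{\ell}\Big( \E\big[F_\psi(\bX^{h,b-1})\big] - \E\big[F_\psi(\bX^{h,b})\big]\Big),
\]
and apply the triangle inequality to move the absolute value inside the sum. For each individual $b$, I would invoke Remark~\ref{rem:average} to write both $\E[F_\psi(\bX^{h,b-1})]$ and $\E[F_\psi(\bX^{h,b})]$ as $\E_{\brho}$ of the corresponding coupled expectations $\E[F_\psi(\hat{\bX}^{h,b-1}(\brho))]$ and $\E[F_\psi(\hat{\bX}^{h,b}(\brho))]$ (where the outer expectation is over $\brho \leftarrow \bX^{h,b}_{[n]\setminus h^{-1}(b)}$, say). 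Subtracting inside the outer $\E_{\brho}$ and pulling the absolute value inside via the triangle inequality gives
\[
\big|\E[F_\psi(\bX^{h,b-1})] - \E[F_\psi(\bX^{h,b})]\big| \;\leq\; \E_{\brho}\!\left[\big|\E[F_\psi(\hat{\bX}^{h,b-1}(\brho))] - \E[F_\psi(\hat{\bX}^{h,b}(\brho))]\big|\right].
\]
Lemma~\ref{lem:one-bucket-hybrid} then bounds the integrand by a quantity that depends only on $h, W, b$ (and is uniform in $\rho$), so the outer $\E_{\brho}$ is a no-op and the lemma bound passes through.

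Finally, I would sum the per-bucket bound over $b = 1, \dots, \ell$. The only $b$-dependent quantity in the right-hand side of Lemma~\ref{lem:one-bucket-hybrid} is $h(W,b)$, which yields the advertised $\sum_{b=1}^{\ell} h(W,b)$ term; every other summand is uniform in $b$ and therefore picks up a factor of $\ell$. Matching the resulting expression against the stated right-hand side of Corollary~\ref{cor:average-sum} completes the derivation. There is essentially no obstacle here: all the genuine content lives inside Lemma~\ref{lem:one-bucket-hybrid} (which packages together the multidimensional Taylor expansion, the CNF-fooling coupling, and the mixture decomposition used to restore the needed conditional independence), so the corollary is purely a bookkeeping step combining a telescoping hybrid, the triangle inequality, and the law of total expectation supplied by Remark~\ref{rem:average}.
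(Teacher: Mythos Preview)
Your proposal is correct and follows essentially the same approach as the paper: telescope via the triangle inequality, invoke Remark~\ref{rem:average} to rewrite each hybrid step as an average over $\brho$, pull the absolute value inside, and apply Lemma~\ref{lem:one-bucket-hybrid} pointwise in $\rho$ before summing over $b$.
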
 

\begin{proof} We have that 
\begin{align*} 
\big| \E\big[ F_\psi(\bX^{h,0}) \big]-\E\big[ F_\psi(\bX^{h,\ell}) \big]\big|  &\le \sum_{b=1}^{\ell} \big| \E\big[ F_\psi(\bX^{h,b-1}) \big]-\E\big[ F_\psi(\bX^{h,b}) \big]\big|   \tag*{(Triangle inequality)} \\
&= \sum_{b=1}^{\ell} \left| \Ex_{\brho\leftarrow\bX^{h,b}_{[n] \setminus h^{-1}(b)}}\big[ F_\psi(\hat{\bX}^{h,b-1}(\brho)) \big]-\Ex_{\brho\leftarrow\bX^{h,b}_{[n] \setminus h^{-1}(b)}}\big[ F_\psi(\hat{\bX}^{h,b}(\brho)) \big]\right|   \tag*{(Remark~\ref{rem:average})} \\ 
&\le  \sum_{b=1}^{\ell} \Ex_{\brho\leftarrow\bX^{h,b}_{[n] \setminus h^{-1}(b)}} \Big[ \big| \E\big[ F_\psi(\hat{\bX}^{h,b-1}(\brho)) \big]-\E\big[ F_\psi(\hat{\bX}^{h,b}(\brho)) \big]\big| \Big],
\end{align*} 
which gives the claimed bound via Lemma~\ref{lem:one-bucket-hybrid}. 
\end{proof} 

We do not have a good bound on the quantity $h(W,b)$ for an arbitrary hash $h : [n] \to [\ell]$ and bucket $b \in [\ell]$.  Instead, we shall use the following:

\begin{lemma}[Lemma 4.1 of~\cite{HKM12}]
\label{lem:HKM4.1} 
For $\ell = 1/\tau$ and $\calH$ a $(2\log k)$-wise independent hash family,  
\[ \Ex_{\bh\leftarrow \calH} \Bigg[ \sum_{b=1}^{\ell}  \bh(W,b)\Bigg] \le \sum_{b=1}^{\ell} \left(\Ex_{\bh\leftarrow\calH} \left[ \blue{\sum_{j=1}^{k}} \| W^j_{h^{-1}(b)}\|^{4\log k}\right]\right)^{1/\log k} \le  4\log k\cdot  \tau \log(1/\tau). \]
\end{lemma}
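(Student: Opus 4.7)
My plan is to split the lemma into its two inequalities and handle them in turn.

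\textbf{Step 1 (first inequality via Jensen).} The map $x\mapsto x^{1/\log k}$ is concave on $[0,\infty)$ for $k\ge 2$, so for each fixed bucket $b\in[\ell]$, Jensen's inequality gives
\[
\Ex_{\bh\leftarrow\calH}[\bh(W,b)] = \Ex_{\bh}\!\left[\Big(\textstyle\sum_{j=1}^k \|W^j_{\bh^{-1}(b)}\|^{4\log k}\Big)^{1/\log k}\right] \le \Big(\Ex_{\bh}\big[\textstyle\sum_{j=1}^k \|W^j_{\bh^{-1}(b)}\|^{4\log k}\big]\Big)^{1/\log k}.
\]
Summing over $b$ and using linearity of expectation on the left yields the first inequality.

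\textbf{Step 2 (symmetry reduction).} For any tuple of at most $2\log k$ coordinates $i_1,\ldots,i_s\in[n]$, $(2\log k)$-wise independence of $\calH$ gives $\Pr[\bh(i_1)=\cdots=\bh(i_s)=b]=\ell^{-s}$, \emph{independent of $b$}. Since the $(2\log k)$-th moment expansion of $\|W^j_{\bh^{-1}(b)}\|^{4\log k}$ only involves tuples of length $\le 2\log k$, the inner expectation $\Ex_\bh[\sum_j\|W^j_{\bh^{-1}(b)}\|^{4\log k}]$ is the same for every $b\in[\ell]$. With $\ell=1/\tau$, the second inequality thus reduces to showing, for an arbitrary single bucket $b$,
\[
\sum_{j=1}^{k}\Ex_{\bh}\!\left[\|W^j_{\bh^{-1}(b)}\|^{4\log k}\right] \le \big(4\log k\cdot \tau^2\log(1/\tau)\big)^{\log k}.
\]

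\textbf{Step 3 (single-vector $p$-th moment).} Write $p:=2\log k$. Since $\|W^j_{\bh^{-1}(b)}\|^2=\sum_i(W^j_i)^2\mathbf{1}[\bh(i)=b]$, expanding the $p$-th power and grouping $p$-tuples $(i_1,\ldots,i_p)\in[n]^p$ by the partition of $[p]$ their coincidences induce --- $m$ blocks of sizes $c_1,\ldots,c_m\ge 1$ summing to $p$ --- and using Step 2 to evaluate $\Pr[\bh(i_1)=\cdots=\bh(i_p)=b]=\ell^{-m}$, we obtain
\[
\Ex_{\bh}\!\left[\|W^j_{\bh^{-1}(b)}\|^{4\log k}\right] \le \sum_{m=1}^{p}\ell^{-m}\!\!\sum_{\substack{c_1+\cdots+c_m=p\\ c_q\ge 1}}\!\frac{p!}{c_1!\cdots c_m!\,m!}\prod_{q=1}^{m}\Big(\textstyle\sum_i(W^j_i)^{2c_q}\Big).
\]
The $\tau$-regularity of $W^j$ together with $\|W^j\|=1$ gives $\max_i(W^j_i)^2\le\tau$ (since $(\max_i(W^j_i)^2)^2\le\sum_i(W^j_i)^4\le\tau^2$), whence $\sum_i(W^j_i)^{2c}=1$ for $c=1$ and $\sum_i(W^j_i)^{2c}\le\tau^c$ for $c\ge 2$ (pulling out $\max^{c-2}$ and invoking regularity). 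Substituting these bounds, letting $t$ count the blocks of size $1$, and using $\ell^{-1}=\tau$, each summand becomes $\tau^{(p-t)+m}$ times a multinomial combinatorial factor. Summing over $j\in[k]$ (a factor of $k=e^{\log k}$ that is absorbed on taking the $(1/\log k)$-th root) and multiplying by $\ell=1/\tau$, the dominant partition shape (as many blocks of size $2$ as possible subject to $\sum c_q=p$) yields the claimed $4\log k\cdot\tau\log(1/\tau)$ bound.

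\textbf{Step 4 (main obstacle).} The principal difficulty is the combinatorial accounting in Step 3: identifying the partition shape $(m,\{c_q\})$ that dominates after the $\tau$-regularity substitution, bounding multinomial coefficients via e.g.\ $p!/\prod c_q!\le p^{p-m}$, and in particular extracting the $\log(1/\tau)$ factor, which must arise from summing a series of the form $\sum_{c\ge 2}\tau^{c-2}/c$ (or similar) over admissible block sizes --- this is the only source of the logarithmic factor in the final bound. Once these estimates are in hand, the final assembly (summing over $j$, taking $1/\log k$-th roots, multiplying by $\ell=1/\tau$) is straightforward bookkeeping.
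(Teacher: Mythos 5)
Your Step 1 is correct and matches the paper exactly: the paper's only comment on the first inequality is that it "is by the power-mean inequality," which is precisely your Jensen argument applied to the concave map $x\mapsto x^{1/\log k}$. Step 2's symmetry reduction is also sound --- $(2\log k)$-wise independence makes $\Ex_{\bh}[\|W^j_{\bh^{-1}(b)}\|^{4\log k}]$ bucket-independent, since the expansion of $\big(\sum_i (W^j_i)^2\mathbf{1}[\bh(i)=b]\big)^{2\log k}$ involves at most $2\log k$ distinct coordinates. The $\tau$-regularity substitutions in Step~3 ($\max_i (W^j_i)^2\le\tau$, hence $\sum_i (W^j_i)^{2c}\le\tau^c$ for $c\ge 2$) are also correct.

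However, the second inequality --- which is the entire content of Lemma~4.1 in \cite{HKM12} --- is not actually proved. Your Step~4 candidly flags "the combinatorial accounting in Step 3" as "the principal difficulty," and then describes what would need to be done rather than doing it: identifying the dominant partition shape, controlling the multinomial coefficients, and extracting the $\log(1/\tau)$ factor are all deferred. Moreover, the mechanism you propose for the logarithm --- a series of the form $\sum_{c\ge 2}\tau^{c-2}/c$ --- does not in fact yield a $\log(1/\tau)$: that series is $1/2 + \tau/3 + \tau^2/4 + \cdots = O(1)$, bounded uniformly in $\tau$. So the one place the target bound's structure must emerge is exactly the place your sketch has a plausible-sounding but incorrect heuristic. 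Be aware too that the paper itself never proves this statement; it invokes it as a black box from~\cite{HKM12}, whose proof (like the related error-term bound the paper later references for Claim~4.4) goes through hypercontractivity rather than a bare moment-expansion count, which is likely a cleaner route to the $p^{p/2}$-type growth and the logarithmic correction than the partition bookkeeping you set up. In short: Steps 1--2 are right, but the proof has a genuine gap at the core estimate, and the sketched route to $\log(1/\tau)$ would fail as stated.
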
 
\noindent (The middle quantity is what~\cite{HKM12} denotes by $\calH(W)$ and is the quantity they bound; the left inequality is by the power-mean inequality.)

We are now ready to prove Theorem~\ref{thm:our-smooth-prg}:

\begin{proof}[Proof of Theorem~\ref{thm:our-smooth-prg} assuming Lemma~\ref{lem:one-bucket-hybrid}] 
\begin{align*} 
& \left| \Ex_{\bY\leftarrow\Gen}\big[ \psi^\ast_{\lambda,k+1,\blue{(\vec{\theta},0)}}(W^T \bY, \blue{G(\bY)}) \big]  - \Ex_{\bU\leftarrow\bits^n}\big[ \psi^\ast_{\lambda,k+1,\blue{(\vec{\theta},0)}}(W^T \bU,\blue{G(\bU)}) \big]  \right| \\
     &= \big| \Ex\big[ F_\psi(\bX^{\bh,0})\big] - \Ex\big[ F_\psi(\bX^{\bh,\ell})\big] \big| \tag*{(Remark~\ref{rem:first-and-last} and definition of $F_\psi$)} \\
     &\le \Ex_{\bh\leftarrow\calH} \Big[ \big| \Ex\big[ F_\psi(\bX^{\bh,0})\big] - \Ex\big[ F_\psi(\bX^{\bh,\ell})\big] \big|\Big] \\
     &=  O\left(
{\frac {(\log k)^3}{\lambda^4}}  \left((\log k)^3 \cdot \tau \log(1/\tau) + {\frac 1 \tau} \cdot \delta_\CNF \cdot n^2\right) + {\frac 1 \tau}  \left(\sqrt{\delta_\CNF} +   \sum_{a=1}^3 n^a\sqrt{\delta_\CNF} \cdot \frac{(\log k)^{a-1}}{\lambda^a}\right)
\right),\end{align*} 
where the final equality is by Corollary~\ref{cor:average-sum}, Lemma~\ref{lem:HKM4.1}, and recalling that $\ell = 1/\tau$. 
\end{proof}

\section{A single step of the hybrid argument: Proof of Lemma~\ref{lem:one-bucket-hybrid}} 
\label{sec:one-step} 
Fix a hash $h : [n] \to [\ell]$, a bucket $b \in [\ell]$, and a restriction $\rho \in \bits^{[n]\setminus h^{-1}(b)}$.  
As is standard in applications of the Lindeberg method, we will express $F_\psi(\hat{\bX}^{h,b-1}(\rho))$ and $F_\psi(\hat{\bX}^{h,b}(\rho))$ as $\psi(\bv + \bDelta^\unif)$ and $\psi(\bv + \bDelta^\pseudo)$ respectively, {where $\bv$ is common to both random variables.  (Very roughly speaking, the Lindeberg method employs Taylor's theorem to show that quantities such as (\ref{eq:fixedhrho}) are small if $\bDelta^{\unif}$ and $\bDelta^{\pseudo}$ are sufficiently ``small" and $\psi$ is sufficiently ``nice.")}.   We now describe the choice of random variables $\bv, \bDelta^{\unif},\bDelta^{\pseudo}\in \R^{k+1}$ to accomplish this.  

We define $v : \bits^{h^{-1}(b)} \to \R^{k+1}$ as follows: 
\blue{
\begin{align*}
v(x)_j &= \sum_{i \in [n]\setminus h^{-1}(b)} W^{j}_i \rho_i  \qquad	\text{for $j \in [k]$}, \\
v(x)_{k+1} &= (G\uhr \rho)(x).  
\end{align*}
}
Recalling that $\rho$ is a fixed restriction, we observe that only the \blue{final} coordinate of $v$ depends on its input $x$. 
We further define $\Delta^{\unif} : \bits^{h^{-1}(b)} \to \R^{k+1}$ and $\Delta^{\pseudo} : \bits^{h^{-1}(b)} \times \bits^{h^{-1}(b)} \to \R^{k+1}$ as follows:
\blue{ 
\begin{align}
\Delta^{\unif}(x)_j &= \sum_{i \in h^{-1}(b)} W^j_i x_i  \qquad \text{for $j\in [k]$,} \nonumber \\
\Delta^{\unif}(x)_{k+1} &= 0,  \label{eq:always-zero}
\end{align}}
and 
\blue{
\begin{align*}
\Delta^{\pseudo}(x,z)_j &= \sum_{i \in h^{-1}(b)} W^j_i z_i \qquad \text{for $j\in [k]$,} \\
\Delta^{\pseudo}(x,z)_{k+1} &=  (G \uhr \rho)(z) - (G \uhr \rho)(x).
		  \end{align*} 
}
We observe that 
\begin{align*}
F_\psi(\hat{\bX}^{h,b-1}(\rho)) &\equiv \psi(v(\bU) + \Delta^{\unif}(\bU)) \\
F_\psi(\hat{\bX}^{h,b}(\rho)) &\equiv \psi(v(\hat{\bU}) + \Delta^{\pseudo}(\hat{\bU},\hat{\bZ})), 
\end{align*} 
and so the desired quantity (\ref{eq:fixedhrho}) of Lemma~\ref{lem:one-bucket-hybrid} that we wish to upper bound may be re-expressed as
\begin{align} 
(\ref{eq:fixedhrho}) &= 
 \big| \E[ F_\psi(\hat{\bX}^{h,b-1}(\rho))] - \E[F_\psi(\hat{\bX}^{h,b}(\rho)) ] \big|  \nonumber \\
&= \Big| \Ex_{\bU} \big[ \psi(v(\bU) + \Delta^{\unif}(\bU))\big] - \Ex_{(\hat{\bU},\hat{\bZ})}\big[ \psi(v(\hat{\bU}) + \Delta^{\pseudo}(\hat{\bU},\hat{\bZ})) \big]\Big|. \label{eq:our-goal}
\end{align} 

We observe that unlike standard Lindeberg-style proofs of invariance principles and associated pseudorandomness results, in our setup $v(\bU)$ and $\Delta^{\unif}(\bU)$ are not independent, and likewise neither are $v(\hat{\bU})$ and $\Delta^{\pseudo}(\hat{\bU},\hat{\bZ})$.  This motivates the definitions of the following subsection. 

\subsection{Mixtures of conditional distributions} 

Let $\bU^1$ denote the distribution $\bU$ conditioned on outcomes $x \in \bits^{h^{-1}(b)}$ such that $(G \uhr \rho)(x) = 1$, and similarly $\bU^{-1}$. Equivalently, $\bU^1$ and $\bU^{-1}$ are uniform distributions over $(G \uhr \rho)^{-1}(1)$ and $(G \uhr \rho)^{-1}(-1)$ respectively.  We note that $\bU$ can be expressed as the mixture of $\bU^1$ and $\bU^{-1}$ with mixing weights 
\begin{align*} 
\pi_1 &:= \Prx_{\bU}\big[ (G\uhr\rho)(\bU) = 1\big]  \\
\pi_{-1} &:= \Prx_{\bU}\big[ (G\uhr\rho)(\bU) = -1\big]. 
\end{align*}  
We may suppose without loss of generality that $\Prx_{\bU}[(G \uhr \rho)(\bU)=-1] \geq \Pr_{\bZ}[(G \uhr \rho)(\bZ)=-1]$ (the other case is entirely similar).     

Next, we similarly express the joint distribution $(\hat{\bU}, \hat{\bZ})$ as the mixture of conditional distributions $(\hat{\bU}^1, \hat{\bZ}^1)$, $(\hat{\bU}^{-1}, \hat{\bZ}^{-1})$, $(\hat{\bU}^{\err}, \hat{\bZ}^{\err})$, where 

\begin{itemize}
\item $(\hat{\bU}^1, \hat{\bZ}^1)$ is supported on pairs $(x,z)$ such that $(G \uhr \rho)(x)=(G \uhr \rho)(z)=1$
\item $(\hat{\bU}^{-1}, \hat{\bZ}^{-1})$ is supported on pairs $(x,z)$ such that $(G \uhr \rho)(x)=(G \uhr \rho)(z)=-1$
\item $(\hat{\bU}^{\err}, \hat{\bZ}^{\err})$ is supported on pairs $(x,z)$ such that $(G \uhr \rho)(x)=-1, (G \uhr \rho)(z)=1$.
\end{itemize} 
The mixing weights are $\tilde{\pi}_1, \tilde{\pi}_{-1}$, and $\tilde{\pi}_{\err}$ respectively, where 
\[ \tilde{\pi}_1 = \pi_1, \qquad \tilde{\pi}_{-1} = \pi_{-1} - \tilde{\pi}_{\err}, \qquad  \tilde{\pi}_\err \le \delta_\CNF\]
and the bound  $\tilde{\pi}_\err \leq \delta_\CNF$ follows from (\ref{eq:coupling}). 
We stress that while $\hat{\bU}^1$ is distributed identically as $\bU^1$, this is not the case for $\hat{\bU}^{-1}$ and $\bU^{-1}$, because of the small fraction of pairs that do not align perfectly under the coupling $(\hat{\bU},\hat{\bZ})$ and are captured by $(\hat{\bU}^\err,\hat{\bZ}^\err)$.

\begin{proposition}[Expressing $\bU$ and $(\hat{\bU},\hat{\bZ})$ as mixtures of conditional distributions]
\label{prop:mixture} 
For any function $f : \bits^{h^{-1}(b)} \to \R$, 
\[ \Ex_{\bU}\big[ f(\bU)\big] = \pi_1 \Ex_{\bU^1} \big[ f(\bU^1)\big]  + \pi_{-1} \Ex_{\bU^{-1}}\big[ f(\bU^{-1})\big]. \] 
Similarly, for any function $f : \bits^{h^{-1}(b)} \times \bits^{h^{-1}(b)} \to \R$, 
\begin{align*} 
& \Ex_{(\hat{\bU},\hat{\bZ})}\big[ f(\hat{\bU},\hat{\bZ})\big] \\
&= \tilde{\pi}_1 \Ex_{(\hat{\bU}^1,\hat{\bZ}^1)} \big[ f(\hat{\bU}^1,\hat{\bZ}^1) \big] + \tilde{\pi}_{-1}\Ex_{(\hat{\bU}^{-1},\hat{\bZ}^{-1})} \big[ f(\hat{\bU}^{-1},\hat{\bZ}^{-1}) \big] + \tilde{\pi}_{\err}\Ex_{(\hat{\bU}^{\err},\hat{\bZ}^{\err})}\big[ f(\hat{\bU}^{\err},\hat{\bZ}^{\err})\big] \\
&= \pi_1 \Ex_{(\hat{\bU}^1,\hat{\bZ}^1)} \big[ f(\hat{\bU}^1,\hat{\bZ}^1) \big] + (\pi_{-1} - \tilde{\pi}_{\err}) \Ex_{(\hat{\bU}^{-1},\hat{\bZ}^{-1})} \big[ f(\hat{\bU}^{-1},\hat{\bZ}^{-1}) \big] + \tilde{\pi}_{\err}\Ex_{(\hat{\bU}^{\err},\hat{\bZ}^{\err})}\big[ f(\hat{\bU}^{\err},\hat{\bZ}^{\err})\big] \\
&= \pi_1 \Ex_{(\hat{\bU}^1,\hat{\bZ}^1)} \big[ f(\hat{\bU}^1,\hat{\bZ}^1) \big] + \pi_{-1}  \Ex_{(\hat{\bU}^{-1},\hat{\bZ}^{-1})} \big[ f(\hat{\bU}^{-1},\hat{\bZ}^{-1}) \big] \pm 2 \,\delta_\CNF\cdot  \| f \|_\infty.
\end{align*} 
\end{proposition}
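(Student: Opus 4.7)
}
The statement is essentially a bookkeeping exercise in the law of total expectation, together with a careful accounting of how the two different weights $\pi_{-1}$ and $\tilde{\pi}_{-1}$ differ. The plan is to verify each displayed equality in turn, with the only nontrivial step being the last one, where we must pass from the exact identity involving $\tilde{\pi}_1,\tilde{\pi}_{-1},\tilde{\pi}_{\err}$ to the approximate identity involving only $\pi_1$ and $\pi_{-1}$.

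First, the identity $\Ex_{\bU}[f(\bU)] = \pi_1 \Ex_{\bU^1}[f(\bU^1)] + \pi_{-1}\Ex_{\bU^{-1}}[f(\bU^{-1})]$ is immediate from the law of total expectation applied to the partition of $\bits^{h^{-1}(b)}$ induced by the sign of $(G\uhr\rho)$: the two conditional distributions $\bU^1, \bU^{-1}$ were defined precisely as $\bU$ conditioned on these two events, with corresponding probabilities $\pi_1$ and $\pi_{-1}$.  Second, the first displayed equality for $(\hat{\bU},\hat{\bZ})$ is just the analogous law of total expectation: by the definition of the coupling $(\hat{\bU},\hat{\bZ})$ and the three conditional distributions $(\hat{\bU}^1,\hat{\bZ}^1), (\hat{\bU}^{-1},\hat{\bZ}^{-1}), (\hat{\bU}^{\err},\hat{\bZ}^{\err})$, these three events partition the support of $(\hat{\bU},\hat{\bZ})$ (using here that we have assumed $\Prx_{\bU}[(G\uhr\rho)(\bU)=-1] \ge \Prx_{\bZ}[(G\uhr\rho)(\bZ)=-1]$, so the ``error'' pairs are precisely those of the form $(-1,+1)$), and the mixing weights are, by definition, the corresponding probabilities $\tilde{\pi}_1,\tilde{\pi}_{-1},\tilde{\pi}_{\err}$.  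The second equality is purely syntactic: simply substitute $\tilde{\pi}_1 = \pi_1$ and $\tilde{\pi}_{-1} = \pi_{-1}-\tilde{\pi}_{\err}$ as stipulated.

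The only step that requires any estimation is the third (and final) equality, which replaces the exact decomposition by an approximate one with $\pi_{-1}$ in place of $\tilde{\pi}_{-1}$ and no $\tilde{\pi}_{\err}$ term. Here I would simply regroup:
\begin{align*}
(\pi_{-1}-\tilde{\pi}_{\err})\Ex_{(\hat{\bU}^{-1},\hat{\bZ}^{-1})}[f] + \tilde{\pi}_{\err}\Ex_{(\hat{\bU}^{\err},\hat{\bZ}^{\err})}[f]
&= \pi_{-1}\Ex_{(\hat{\bU}^{-1},\hat{\bZ}^{-1})}[f] + \tilde{\pi}_{\err}\Big(\Ex_{(\hat{\bU}^{\err},\hat{\bZ}^{\err})}[f] - \Ex_{(\hat{\bU}^{-1},\hat{\bZ}^{-1})}[f]\Big).
\end{align*}
The difference of expectations is bounded in absolute value by $2\|f\|_\infty$, and $\tilde{\pi}_{\err}\le \delta_\CNF$ by the coupling guarantee~(\ref{eq:coupling}), so the residual term has magnitude at most $2\delta_\CNF\|f\|_\infty$, which yields the $\pm 2\delta_\CNF\|f\|_\infty$ error in the final line.

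I do not anticipate any real obstacle here: the definitions of the conditional distributions and of the mixing weights $\pi_{\pm 1}, \tilde{\pi}_{\pm 1}, \tilde{\pi}_{\err}$ have already been made so as to make this proposition essentially tautological, and the only quantitative input is the coupling bound $\tilde{\pi}_{\err}\le \delta_\CNF$ from~(\ref{eq:coupling}), which was the whole purpose of choosing the CNF-fooling coupling $(\hat{\bU},\hat{\bZ})$ in the first place. The mild care needed is just to keep track of the three cases and of the asymmetry between $\hat{\bU}^{-1}$ and $\bU^{-1}$ (the latter is \emph{not} distributed identically to the former, as the paper stresses), which is exactly the source of the $\tilde{\pi}_{\err}$ term that must be absorbed into the error.
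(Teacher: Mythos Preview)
Your proposal is correct and complete. The paper itself states Proposition~\ref{prop:mixture} without proof, treating it as immediate from the definitions of the conditional distributions and mixing weights together with the coupling bound~(\ref{eq:coupling}); your write-up simply spells out the law of total expectation and the regrouping that the paper leaves implicit, so there is nothing to compare.
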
 

These conditional distributions are useful because of the following two simple but crucial observations:

\begin{observation}[$\bv$ becomes constant] 
\label{obs:v} 
Fix $c \in \bits$.  For all $x \in \supp(\bU^c)$ we have that $v(x)$ is the same fixed vector $v^\ast\in \R^{k+1}$ given by
\blue{
\begin{align*}
v^*_j &= \sum_{i \in [n] \setminus {h^{-1}(b)}} W^j_i \rho_i \qquad \text{for $j \in [k]$,} \\
v^*_{k+1} &= (G \uhr \rho)(x) = c. 
\end{align*}
}
The same is true for $\hat{\bU}^c$: for all $x \in \supp(\hat{\bU}^c)$ we have $v(x) = v^*$.  
\end{observation}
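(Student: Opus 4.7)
The plan is to simply unfold the relevant definitions; the observation is essentially a bookkeeping statement about what the various conditional distributions and the map $v$ have been defined to do. I split the argument into two parts, corresponding to the first $k$ coordinates of $v$ and to its last coordinate.

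First, for each $j \in [k]$, inspecting the definition of $v(x)_j$ shows that
\[
v(x)_j \;=\; \sum_{i \in [n] \setminus h^{-1}(b)} W^{j}_i \rho_i,
\]
a quantity that depends only on the (fixed) restriction $\rho$ and on the weight matrix $W$, not on the argument $x$. Thus $v(x)_j = v^*_j$ for every $x \in \bits^{h^{-1}(b)}$, regardless of which distribution $x$ is drawn from. In particular this holds for every $x \in \supp(\bU^c)$ and every $x \in \supp(\hat{\bU}^c)$.

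Second, for the $(k+1)$-st coordinate, $v(x)_{k+1} = (G \uhr \rho)(x)$ by definition. Here the two conditional distributions do the work: $\bU^c$ is defined to be $\bU$ conditioned on the event $(G\uhr\rho)(\bU) = c$, so every $x \in \supp(\bU^c)$ satisfies $(G\uhr\rho)(x) = c$; and by the way the mixture $(\hat{\bU},\hat{\bZ}) = \tilde\pi_1(\hat{\bU}^1,\hat{\bZ}^1) + \tilde\pi_{-1}(\hat{\bU}^{-1},\hat{\bZ}^{-1}) + \tilde\pi_\err(\hat{\bU}^\err,\hat{\bZ}^\err)$ was constructed, for $c \in \bits$ the first marginal $\hat{\bU}^c$ is supported precisely on $\{x : (G\uhr\rho)(x) = c\}$. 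Hence for every $x$ in the support of either $\bU^c$ or $\hat{\bU}^c$ we have $v(x)_{k+1} = c = v^*_{k+1}$. Combined with the first part, this proves that $v(x) = v^*$ on the support of both distributions.

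I do not anticipate any obstacle: this observation simply records the design choices made in the previous subsection (the definitions of $v$, of the conditional distributions $\bU^{\pm 1}$, and of the joint conditional distributions $(\hat{\bU}^{\pm 1},\hat{\bZ}^{\pm 1})$), and its content is that $v$ is literally constant on each of the conditioning events. Its purpose, as signaled by the name of this subsection and the preceding motivating remarks, is to restore the independence between the ``base point'' and the ``increment'' that is required for the subsequent Taylor/Lindeberg step, by collapsing the random base $v(\bU)$ (respectively $v(\hat{\bU})$) to a deterministic vector $v^*$ once we condition on the value of $(G\uhr\rho)$.
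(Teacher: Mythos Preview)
Your proposal is correct and matches the paper's treatment: the paper states this as an observation without proof, since it follows immediately from the definitions of $v$, $\bU^c$, and $(\hat{\bU}^c,\hat{\bZ}^c)$, exactly as you have unpacked.
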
 

Note that as a consequence of Observation~\ref{obs:v}, the random variables $v(\bU^c)$ and $\Delta^{\unif}(\bU^c)$ are independent for $c \in \bits$, and likewise $v(\hat{\bU}^c)$ and $\Delta^{\pseudo}(\hat{\bU}^c,\hat{\bZ}^c)$ are independent as well; cf.~our remark following Equation (\ref{eq:our-goal}). The next observation further motivates our couplings $(\hat{\bU}^1,\hat{\bZ}^1)$ and $(\hat{\bU}^{-1},\hat{\bZ}^{-1})$:
\begin{observation}[$\Delta^{\pseudo}_{\blue{k+1}}= 0$] 
\label{obs:first} 
Fix $c \in \bits$.  For all $(\hat{U},\hat{Z}) \in \supp(\hat{\bU}^{c},\hat{\bZ}^{c})$, we have 
\[
\Delta^{\pseudo}_{\blue{k+1}}(\hat{U},\hat{Z}) = (G \uhr \rho)(\hat{Z}) - (G \uhr \rho)(\hat{U}) = 0. 
\]
\end{observation}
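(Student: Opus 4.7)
The plan is to prove Observation~\ref{obs:first} by a direct unpacking of definitions. First I would recall the defining equation
\[
\Delta^{\pseudo}(x,z)_{k+1} = (G \uhr \rho)(z) - (G \uhr \rho)(x),
\]
which gives the first equality in the statement for free, for any inputs $(x,z)\in \bits^{h^{-1}(b)}\times \bits^{h^{-1}(b)}$ whatsoever.

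Next I would invoke the definition of the coupled conditional distribution $(\hat{\bU}^c,\hat{\bZ}^c)$ from the mixture decomposition of $(\hat{\bU},\hat{\bZ})$ given just before Proposition~\ref{prop:mixture}. By construction, for $c=1$ the pair $(\hat{\bU}^1,\hat{\bZ}^1)$ is supported on pairs $(x,z)$ with $(G\uhr\rho)(x)=(G\uhr\rho)(z)=1$, and for $c=-1$ the pair $(\hat{\bU}^{-1},\hat{\bZ}^{-1})$ is supported on pairs $(x,z)$ with $(G\uhr\rho)(x)=(G\uhr\rho)(z)=-1$. Hence for any $(\hat{U},\hat{Z})\in\supp(\hat{\bU}^c,\hat{\bZ}^c)$ we have $(G\uhr\rho)(\hat{Z})=c=(G\uhr\rho)(\hat{U})$, so their difference vanishes, establishing the second equality.

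There is no real obstacle here: the observation is designed to record the defining property of the coupling-plus-mixture construction, namely that the ``aligned'' components $(\hat{\bU}^1,\hat{\bZ}^1)$ and $(\hat{\bU}^{-1},\hat{\bZ}^{-1})$ are exactly the parts of the coupling on which $(G\uhr\rho)$ agrees on the two marginals (with all the disagreement swept into the third, small-weight component $(\hat{\bU}^\err,\hat{\bZ}^\err)$). The content is entirely in the definitions; the proof is a one-line verification and I would present it as such.
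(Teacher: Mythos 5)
Your proof is correct and matches the paper's (implicit) reasoning exactly: the observation is an immediate consequence of the definition of $\Delta^\pseudo_{k+1}$ together with the defining property of the conditional distributions $(\hat{\bU}^c,\hat{\bZ}^c)$, which are supported on pairs where $(G\uhr\rho)$ takes the common value $c$ on both coordinates. The paper states this as an observation without a separate proof, treating it as a one-line verification just as you do.
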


\subsubsection{Massaging our goal (\ref{eq:our-goal})} 

Applying Proposition~\ref{prop:mixture}, we can rewrite the RHS of (\ref{eq:our-goal}) as: 
\begin{align*}
&  \Big| \Ex_{\bU} \big[ \psi(v(\bU) + \Delta^{\unif}(\bU))\big] - \Ex_{(\hat{\bU},\hat{\bZ})}\big[ \psi(v(\hat{\bU}) + \Delta^{\pseudo}(\hat{\bU},\hat{\bZ})) \big]\Big| \\
& = \bigg| \Big(\pi_1 \Ex_{\bU^1} \big[ \psi(v(\bU^1) + \Delta^{\unif}(\bU^1))\big]
+ \pi_{-1} \Ex_{\bU^{-1}} \big[ \psi(v(\bU^{-1}) + \Delta^{\unif}(\bU^{-1}))\big]\Big) \\
&\ \ \ \ - \Big(\pi_1 \Ex_{(\hat{\bU}^1,\hat{\bZ}^1)} \big[ \psi(v(\hat{\bU}^1) + \Delta^{\pseudo}(\hat{\bU}^1,\hat{\bZ}^1)) \big] 
+ \pi_{-1} \Ex_{(\hat{\bU}^{-1}, \hat{\bZ}^{-1})} \big[ \psi(v(\hat{\bU}^{-1}) + \Delta^{\pseudo}(\hat{\bU}^{-1},\hat{\bZ}^{-1})) \big] \Big) \bigg| \\ 
&\ \ \ \ \pm 2\,\delta_\CNF \cdot \| \psi \|_\infty  \\ 
& \leq
 \pi_1 \cdot 
\bigg| \,  \Ex_{\bU^1} \big[ \psi(v(\bU^1) + \Delta^{\unif}(\bU^1))\big] -  \Ex_{(\hat{\bU}^1,\hat{\bZ}^1)} \big[ \psi(v(\hat{\bU}^1) + \Delta^{\pseudo}(\hat{\bU}^1,\hat{\bZ}^1)) \big]  \bigg| \\
&\ \ \ \ + \pi_{-1} \cdot \bigg| \Ex_{\bU^{-1}} \big[ \psi(v(\bU^{-1}) + \Delta^{\unif}(\bU^{-1}))\big] - \Ex_{(\hat{\bU}^{-1}, \hat{\bZ}^{-1})} \big[ \psi(v(\hat{\bU}^{-1}) + \Delta^{\pseudo}(\hat{\bU}^{-1},\hat{\bZ}^{-1})) \big] \big] \bigg| \\
&\ \ \ \ + 2\,\delta_\CNF, 
\end{align*}
where the final inequality uses the fact that $\psi$ has range $[-1,1]$.

We note that for $c \in \bits$, 
\[ \pi_c \cdot \bigg|\,  \Ex_{\bU^c} \big[ \psi(v(\bU^c) + \Delta^{\unif}(\bU^c))\big] -  \Ex_{(\hat{\bU}^c,\hat{\bZ}^c)} \big[ \psi(v(\hat{\bU}^c) + \Delta^{\pseudo}(\hat{\bU}^c,\hat{\bZ}^c)) \big] \bigg| \le 2\,\pi_c \cdot \| \psi \|_\infty = 2\,\pi_c,  \] 
which is at most $2 \sqrt{\delta_\CNF}$ if $\pi_c \leq \sqrt{\delta_\CNF}$ (this is the $O(\sqrt{\delta_\CNF})$ on the RHS of (\ref{eq:fixedhrho})).  We subsequently assume that $\pi_c \ge \sqrt{\delta_\CNF}$, and proceed to bound
\[
\sum_{c \in \bits} \pi_c \cdot \bigg|\,  \Ex_{\bU^c} \big[ \psi(v(\bU^c) + \Delta^{\unif}(\bU^c))\big] -  \Ex_{(\hat{\bU}^c,\hat{\bZ}^c)} \big[ \psi(v(\hat{\bU}^c) + \Delta^{\pseudo}(\hat{\bU}^c,\hat{\bZ}^c)) \big] \bigg|.
\]

\subsection{Applying Taylor's theorem} 

We proceed to analyze
\[
 \Ex_{\bU^c} \big[ \psi(v(\bU^c) + \Delta^{\unif}(\bU^c))\big] -  \Ex_{(\hat{\bU}^c,\hat{\bZ}^c)} \big[ \psi(v(\hat{\bU}^c) + \Delta^{\pseudo}(\hat{\bU}^c,\hat{\bZ}^c)) \big]  \]
for $c \in \{-1,1\}$.  We will do so by analyzing the Taylor expansion of $\psi(v + \Delta)$ (Fact~\ref{fact:taylor}):  

 \begin{align*}
 \psi(v + \Delta) &= \psi(v)  \tag*{(Zeroth-order term)} \\
&\ \ + \sum_{j \in [k+1]} (\partial_j \psi)(v) \Delta_j  \tag*{(First-order terms)}  \\
&\ \ + \sum_{j,j' \in [k+1]} \frac1{(j,j')!}  (\partial_{j,j'} \psi)(v) \Delta_j \Delta_{j'}   \tag*{(Second-order terms)} \\
&\ \ + \sum_{j,j',j'' \in [k+1]}\frac1{(j,j',j'')!}  (\partial_{j,j',j''} \psi)(v) \Delta_j \Delta_{j'} \Delta_{j''}   \tag*{(Third-order terms)} \\
&\ \ \pm \| \psi^{(4)} \|_1 \cdot \max_{j \in [k+1]} |\Delta_j|^4. \tag*{(Error term)} 
\end{align*}

Let us consider each of the five terms in the Taylor expansion, starting with the easiest one:

\begin{proposition} [Expected difference of zeroth-order terms]
\label{prop:zeroth} 
\[ \Ex_{\bU^c} \big[ \psi(v(\bU^c)) \big] - \Ex_{(\hat{\bU}^c,\hat{\bZ}^c)} \big[ \psi(v(\hat{\bU}^c)) \big] = 0. \] 
\end{proposition}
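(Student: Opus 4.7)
The plan is to observe that the proposition is a direct consequence of Observation~\ref{obs:v}, which asserts that $v$ is a constant function when restricted to either $\supp(\bU^c)$ or $\supp(\hat{\bU}^c)$. Specifically, for the fixed hash $h$, bucket $b$, restriction $\rho$, and sign $c \in \bits$, let $v^*\in \R^{k+1}$ denote the common vector given by $v^*_j = \sum_{i \in [n]\setminus h^{-1}(b)} W^j_i \rho_i$ for $j \in [k]$ and $v^*_{k+1} = c$. By Observation~\ref{obs:v}, every outcome of $\bU^c$ and every outcome of the first coordinate $\hat{\bU}^c$ of the coupled pair $(\hat{\bU}^c,\hat{\bZ}^c)$ (both being supported on $(G\uhr \rho)^{-1}(c)$) maps to this same $v^*$ under $v$.

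Consequently, both expectations reduce to the single deterministic value $\psi(v^*)$:
\[
\Ex_{\bU^c}\big[\psi(v(\bU^c))\big] \;=\; \psi(v^*) \;=\; \Ex_{(\hat{\bU}^c,\hat{\bZ}^c)}\big[\psi(v(\hat{\bU}^c))\big],
\]
so their difference is exactly zero. There is no obstacle here — this is the ``free'' case of the hybrid analysis, and conceptually it is the payoff of having chosen the mixture decomposition into conditional distributions $\bU^c$ and $(\hat{\bU}^c,\hat{\bZ}^c)$: conditioning on a fixed value $c$ of $(G\uhr\rho)$ forces the otherwise non-constant $(k+1)$-st coordinate of $v$ to be fixed at $c$, while the first $k$ coordinates of $v$ depend only on the fixed restriction $\rho$ and not on the random input at all. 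The real work in the hybrid argument will arise in analyzing the first-, second-, third-order, and error terms in the Taylor expansion, where one must exploit the bounded independence of $\bZ$ together with the coupling to control mismatches in the low-order moments of $\Delta^{\unif}$ and $\Delta^{\pseudo}$.
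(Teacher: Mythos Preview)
Your proposal is correct and takes essentially the same approach as the paper: both invoke Observation~\ref{obs:v} to conclude that $v(\bU^c)$ and $v(\hat{\bU}^c)$ are each identically the constant vector $v^*$, whence both expectations equal $\psi(v^*)$ and their difference vanishes.
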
 
\begin{proof}
Recalling Observation~\ref{obs:v}, we have that 
\[ v(x) = v(x') = v^* \]
for all $x \in \supp(\bU^c)$ and $x' \in \supp(\hat{\bU}^c)$, where $v^*$ is a fixed vector in $\R^{k+1}$.  In order words, the random variables $v(\bU^c)$  and $v(\hat{\bU}^c)$ are both supported entirely on the same constant $v^*$.
\end{proof} 

\subsubsection{Expected difference of third-order terms} 

In this section we bound the expected difference of the third-order terms: 
\begin{align} 
& \pi_c \cdot \bigg| \Ex_{\bU^c} \bigg[ \sum_{j,j',j'' \in [k+1]} (\partial_{j,j',j''} \psi)(v(\bU^c)) \Delta^{\unif}(\bU^c)_j \Delta^{\unif}(\bU^c)_{j'} \Delta^{\unif}(\bU^c)_{j''}\bigg] \nonumber \\
&\ \ \ \ \ \ -  \Ex_{(\hat{\bU}^c,\hat{\bZ}^c)} \bigg[ \sum_{j,j',j'' \in [k+1]} (\partial_{j,j',j''} \psi)(v(\hat{\bU}^c)) \Delta^{\pseudo}(\hat{\bU}^c,\hat{\bZ}^c)_j \Delta^{\pseudo}(\hat{\bU}^c,\hat{\bZ}^c)_{j'} \Delta^{\pseudo}(\hat{\bU}^c,\hat{\bZ}^c)_{j''}\bigg] \bigg|. \label{eq:third-order-a} 
\end{align}
We observe that in standard applications of the Lindeberg method the quantity analogous to the above quantity would be exactly zero due to matching moments (see the parenthetical following Equation~(\ref{eq:loe}) \violet{below}). Since our setting requires that we perform the hybrid argument over  the conditional distributions $\bU^c$ and $(\hat{\bU}^c,\hat{\bZ}^c)$ (rather than the global distributions $\bU$ and $(\hat{\bU},\hat{\bZ})$) we no longer have matching moments, but our analysis in this section shows that the error incurred by the mismatch is acceptably small.  More precisely, we will prove that (\ref{eq:third-order-a}) is at most $O(n^3 \sqrt{\delta_\CNF}\cdot (\log k)^2/\lambda^3)$.   An identical argument shows that the analogous quantities for the first- and second-order terms are at most $O(n^2\sqrt{\delta_\CNF}\cdot (\log k)/\lambda^2)$ and $O(n\sqrt{\delta_\CNF}/\lambda)$ respectively. 

We begin by noting that 
\begin{align}
(\ref{eq:third-order-a}) &= \pi_c\cdot  \Bigg| \sum_{j,j',j'' \in [k+1]} (\partial_{j,j',j''} \psi)(v^\ast)  \nonumber \\ 
& \ \ \ \ \ \ \ \ \ \  \ \ \ \ \ \ \ \ \ \ \ \ \   \underbrace{\bigg(\Ex_{\bU^c} \bigg[\prod_{\xi \in \{j,j',j''\}} \Delta^{\unif}(\bU^c)_\xi \bigg] - \Ex_{(\hat{\bU}^c,\hat{\bZ}^c)}\bigg[\prod_{\xi\in \{j,j',j''\}}\Delta^{\pseudo}(\hat{\bU}^c,\hat{\bZ}^c)_\xi\bigg] \bigg)}_{\Phi(j,j',j'')}  \Bigg| \label{eq:third-order} 
\end{align} 
where (as in Proposition~\ref{prop:zeroth}) we have again used Observation~\ref{obs:v} to get that $v(\bU^c) \equiv v(\hat{\bU}^c) \equiv v^*$ for a fixed vector $v^* \in \R^{k+1}$.

\begin{observation}[Difference is zero if $j=\blue{k+1}$ participates] 
\label{obs:only-j-ge-2} 
If $\blue{k+1} \in \{j,j',j''\}$ then $\Phi(j,j',j'') =0$. 
\end{observation}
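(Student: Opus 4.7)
The plan is to derive this directly from the definitions of $\Delta^{\unif}$ and $\Delta^{\pseudo}$ in coordinate $k+1$, together with the coupling property recorded in Observation~\ref{obs:first}. The point of the observation is that the $(k+1)$-st coordinate, which carries the CNF value, was specifically engineered to vanish for both random variables, so any third-order product term that ``touches'' that distinguished coordinate contributes nothing.

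Concretely, I would argue as follows. Suppose $k+1 \in \{j,j',j''\}$. For the uniform term, Equation~(\ref{eq:always-zero}) gives $\Delta^{\unif}(x)_{k+1} = 0$ for every $x \in \bits^{h^{-1}(b)}$, and in particular for every $x \in \supp(\bU^c)$. Thus the product $\prod_{\xi \in \{j,j',j''\}} \Delta^{\unif}(\bU^c)_\xi$ contains the factor $\Delta^{\unif}(\bU^c)_{k+1} \equiv 0$, so it vanishes pointwise and its expectation is $0$. For the pseudorandom term, by the definition of $\Delta^{\pseudo}$ we have $\Delta^{\pseudo}(x,z)_{k+1} = (G\uhr\rho)(z) - (G\uhr\rho)(x)$, and Observation~\ref{obs:first} says that for every $(\hat U, \hat Z) \in \supp(\hat{\bU}^c,\hat{\bZ}^c)$, the two CNF values agree, so $\Delta^{\pseudo}(\hat U,\hat Z)_{k+1} = 0$. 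Hence $\prod_{\xi\in\{j,j',j''\}}\Delta^{\pseudo}(\hat{\bU}^c,\hat{\bZ}^c)_\xi$ also vanishes pointwise on the support of $(\hat{\bU}^c,\hat{\bZ}^c)$, and its expectation is $0$. Both terms in the definition of $\Phi(j,j',j'')$ are therefore $0$, so $\Phi(j,j',j'') = 0$.

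Since each step is an immediate consequence of a definition or a previously established observation, there is no real obstacle: the content here is really a bookkeeping check confirming that the coupling construction in Section~\ref{sec:one-step} achieves exactly what it was designed for. The only thing to be slightly careful about is that the vanishing of $\Delta^{\pseudo}_{k+1}$ holds only after conditioning on the coupled event $\{(G\uhr\rho)(\hat U) = (G\uhr\rho)(\hat Z)\}$, which is precisely why we work with $(\hat{\bU}^c,\hat{\bZ}^c)$ rather than with $(\hat{\bU},\hat{\bZ})$ itself; the ``error'' piece $(\hat{\bU}^{\err},\hat{\bZ}^{\err})$ of the mixture, where this identity fails, has already been absorbed into the $\pm 2\,\delta_\CNF$ slack term in the previous subsection.
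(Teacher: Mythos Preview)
Your proof is correct and follows exactly the same approach as the paper: both expectations in $\Phi(j,j',j'')$ vanish because the $(k+1)$-st coordinate of $\Delta^{\unif}$ is identically zero by definition (Equation~(\ref{eq:always-zero})) and the $(k+1)$-st coordinate of $\Delta^{\pseudo}$ is zero on the support of $(\hat{\bU}^c,\hat{\bZ}^c)$ by Observation~\ref{obs:first}. Your write-up is simply a more detailed unpacking of the paper's two-line justification.
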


\begin{proof} 
This is because $\Delta^{\unif}_\blue{k+1}$ is the identically 0 function (by definition; recall Equation (\ref{eq:always-zero})), and $\Delta_{\blue{k+1}}^{\pseudo}(\hat{U}^c,\hat{Z}^c) = 0$ for all $(\hat{U}^c,\hat{Z}^c) \in \supp(\hat{\bU}^c,\hat{\bZ}^c)$ (Observation~\ref{obs:first}). 
\end{proof} 

Therefore it suffices to reason about $\Phi(j,j',j'')$ for triples $j,j',j''\in \blue{[k]}$.   Fix any such triple.  Recalling the definitions of $\Delta^{\unif}_j$ and $\Delta^{\pseudo}_j$ for $j\in\blue{[k]}$: 
\[ \Delta^{\unif}(x)_j = \sum_{i \in h^{-1}(b)} W^j_i x_i,\qquad \Delta^{\pseudo}(x,z)_j = \sum_{i \in h^{-1}(b)} W^j_i z_i \] 
 and applying linearity of expectation, we have that 
\begin{equation} 
\Phi(j,j',j'') = \sum_{i,i',i''\in h^{-1}(b)} W^j_i W^{j'}_{i'} W^{j''}_{i''} \Big( \Ex_{\bU^c}\big[\bU^c_i\,\bU^c_{i'}\, \bU^c_{i''} \big] - \Ex_{(\hat{\bU}^c,\hat{\bZ}^c)}\big[\hat{\bZ}^c_i\,\hat{\bZ}^c_{i'}\, \hat{\bZ}^c_{i''} \big]\Big). \label{eq:loe} 
\end{equation} 
(Note that $\E\big[\bU_i \bU_{i'}\bU_{i''}\big] - \E\big[\hat{\bZ}_i \hat{\bZ}_{i'}\hat{\bZ}_{i''}\big] = 0$ since $\bU$ and $\bZ$ have matching moments. However, since we are working with the conditional distributions $\hat{\bU}^c$ and $(\hat{\bU}^c,\hat{\bZ}^c)$ this is no longer the case; nevertheless, we will now show that this difference is adequately small.)  The first expectation on the RHS can be expressed as $2p_\unif-1$ where 
\begin{equation}
p_\unif =  \Prx_{\bU^c}\big[\bU^c_i\,\bU^c_{i'}\,\bU^c_{i''} = 1\big] 
= \frac{\Prx_{\bU} \big[ \,\bU_i\,\bU_{i'}\,\bU_{i''} = 1, (G \uhr \rho)(\bU) = c\, \big]}{\Prx_{\bU} \big[\, (G \uhr \rho)(\bU) = c\, \big]},  \label{eq:punif} 
\end{equation}
and likewise the second expectation can be expressed as $2p_\pseudo -1$ where 
\begin{align} 
p_\pseudo &=  \Prx_{(\hat{\bU}^c,\hat{\bZ}^c)}\big[\hat{\bZ}^c_i\,\hat{\bZ}^c_{i'}\, \hat{\bZ}^c_{i''}  = 1\big] \nonumber \\ 
&= \frac{\Prx_{(\hat{\bU},\hat{\bZ})}\big[\, \hat{\bZ}_i\,\hat{\bZ}_{i'}\,\hat{\bZ}_{i''} = 1, (G \uhr\rho)(\hat{\bU}) = (G \uhr \rho)(\hat{\bZ})  = c\, \big]}{\Prx_{(\hat{\bU},\hat{\bZ})}\big[\,(G \uhr\rho)(\hat{\bU}) = (G \uhr \rho)(\hat{\bZ})  = c\, \big]}. \label{eq:ppseudo} 
\end{align} 
Note that the numerator of (\ref{eq:ppseudo}) is 
\begin{align*} 
& \Prx_{\bZ}\big[\, \bZ_i\,\bZ_{i'}\,\bZ_{i''} = 1,  (G \uhr \rho)(\bZ)  = c\, \big] - \Prx_{(\hat{\bU},\hat{\bZ})}\big[\, \hat{\bZ}_i,\hat{\bZ}_{i'}\,\hat{\bZ}_{i''} = 1, (G \uhr\rho)(\hat{\bU}) = -c, (G \uhr \rho)(\hat{\bZ})= c \,\big] \\
&\ge \Prx_{\bZ}\big[ \,\bZ_i\,\bZ_{i'}\,\bZ_{i''} = 1,  (G \uhr \rho)(\bZ)  = c \,\big] - \Prx_{(\hat{\bU},\hat{\bZ})}\big[\,(G \uhr\rho)(\hat{\bU}) = -c, (G \uhr \rho)(\hat{\bZ})= c\, \big] \\
&= \Prx_{\bZ}\big[ \,\bZ_i\,\bZ_{i'}\,\bZ_{i''} = 1,  (G \uhr \rho)(\bZ)  = c \,\big] - O(\delta_\CNF). \tag*{(by (\ref{eq:coupling}))}
\end{align*} 
Likewise, the denominator of (\ref{eq:ppseudo}) is $\Prx_{\bZ}\big[\,(G\uhr \rho)(\bZ)= c\,\big] - O(\delta_\CNF)$, again by (\ref{eq:coupling}).  Therefore, we have that 
\begin{align*} 
p_\pseudo  &= \frac{\Prx_{\bZ}\big[ \,\bZ_i\,\bZ_{i'}\,\bZ_{i''} = 1,  (G \uhr \rho)(\bZ)  = c\, \big] - O(\delta_\CNF)}{\Prx_{\bZ}\big[\,(G\uhr \rho)(\bZ)= c\,\big] - O(\delta_\CNF)}. 
\end{align*} 

Next, we note that $\bZ$ $\delta_\CNF$-fools the function $(G \uhr \rho)(x) \oplus \beta$ as well as the function $((G \uhr \rho)(x) \oplus \beta)  \wedge (\violet{\neg\,}(x_i \oplus x_{i'} \oplus x_{i''}))$ for $i,i',i'' \in h^{-1}(b)$, $\beta \in \bits$.  The former is true by Theorem~\ref{thm:BR} and the fact that $r_\bucket \geq O((\log (M/\delta_\CNF))^2)$, and the latter is true because $r_\bucket \geq 4 \log k + O((\log (M/\delta_\CNF))^2 \geq 3 + O((\log (M/\delta_\CNF))^2).$  (Observe that if a function $f(x)$ and all its restrictions are $\kappa$-fooled by $r$-wise independence, then $f(x) \wedge J(x)$, where $J$ is any 3-junta, is $\kappa$-fooled by $(r+3)$-wise independence.) Hence we have 
\begin{align*} 
p_\pseudo  
&=  \frac{\Prx_{\bU}\big[\, \bU_i\,\bU_{i'}\,\bU_{i''} = 1,  (G \uhr \rho)(\bU)  = c\, \big] \pm O(\delta_\CNF)}{\Prx_{\bU}\big[\,(G\uhr \rho)(\bU)= c\,\big] \pm O(\delta_\CNF)}.
\end{align*}

Since by assumption $\pi_c = \Prx_{\bU}\big[(G\uhr \rho)(\bU)= c\big] \ge \sqrt{\delta_\CNF}$, it follows from the above and (\ref{eq:punif}) that 
\[ p_\pseudo = p_\unif \pm O(\sqrt{\delta_\CNF}). \] 
Recalling (\ref{eq:loe}), we have shown that 
\[ |\Phi(j,j',j'')| =  \sum_{i,i',i''\in h^{-1}(b)} W^j_i W^{j'}_{i'} W^{j''}_{i''}  \cdot O(\sqrt{\delta_\CNF}) = O(n^3 \sqrt{\delta_\CNF}), 
\]
\violet{where}
the final equality uses the trivial bounds of $|W^j_i| \le 1$ for all $\blue{j \in [k]}$ and $i \in h^{-1}(b)$, and $|h^{-1}(b)| \le n$.   We conclude that the expected difference \violet{of} the third-order terms is at most 
\begin{align*} (\ref{eq:third-order}) &= \pi_c \cdot \bigg| \sum_{j,j',j'' \in [k+1]} (\partial_{j,j',j''} \psi)(v^*) \cdot \Phi(j,j',j'')  \bigg|  \\
&= O(n^3\sqrt{\delta_\CNF}) \cdot \bigg| \sum_{j,j',j'' \in [k+1]} (\partial_{j,j',j''} \psi)(v^*) \bigg|  \\
&= O(n^3\sqrt{\delta_\CNF}) \cdot \frac{(\log k)^2}{\lambda^3},
\end{align*} 
where the final equality uses the bound on $\| \psi^{(3)} \|_1$ given by Fact~\ref{fact:cor-3.6}. 
\subsubsection{Error term}  

Finally we bound the contribution from the error terms.  This is at most
\begin{align*}
& \sum_{c \in \{-1,1\}} \bigg(\pi_c \Ex_{\bU^c}\Big[ \| \psi^{(4)} \|_1  \max_{j \in [k+1]} \big|\Delta^{\unif}(\bU^c)_j\big|^4 \Big] + \pi_c \Ex_{(\hat{\bU}^c,\hat{\bZ}^c)}\Big[ \| \psi^{(4)} \|_1  \max_{j \in [k+1]} \big|\Delta^{\pseudo}(\hat{\bU}^c,\hat{\bZ}^c)_j\big|^4 \Big]\bigg) \\
&= \| \psi^{(4)} \|_1  \sum_{c \in \{-1,1\}} \bigg(\pi_c \Ex_{\bU^c}\Big[  \max_{\blue{j\in [k]}} \big|\Delta^{\unif}(\bU^c)_j\big|^4 \Big] + \pi_c \Ex_{(\hat{\bU}^c,\hat{\bZ}^c)}\Big[ \max_{\blue{j\in [k]}} \big|\Delta^{\pseudo}(\hat{\bU}^c,\hat{\bZ}^c)_j\big|^4 \Big]\bigg), 
\end{align*} 
where this equality again uses the fact that $\Delta^{\unif}_{\blue{k+1}}$ is the constant $0$ function (by definition; recall Equation (\ref{eq:always-zero})) and $\Delta^{\pseudo}(\hat{U}^c,\hat{Z}^c)_{\blue{k+1}} = 0$ for all $(\hat{U}^c,\hat{Z}^c) \in \supp(\hat{\bU}^c,\hat{\bZ}^c)$ (Observation~\ref{obs:first}) to get that the max's can be taken over $\blue{j \in [k]}$ rather than  $j \in [k+1]$.  Applying both statements of Proposition~\ref{prop:mixture}, we get that the above is 
\begin{align*} & \| \psi^{(4)} \|_1  \bigg( \Ex_{\bU} \Big[  \max_{\blue{j\in [k]}} \big|\Delta^{\unif}(\bU)_j\big|^4 \Big] + \Ex_{(\hat{\bU},\hat{\bZ})}\Big[ \max_{\blue{j\in [k]}} \big|\Delta^{\pseudo}(\hat{\bU},\hat{\bZ})_j\big|^4 \Big] \pm 2\,\delta_\CNF \cdot \Big\| \max_{\blue{j\in [k]}} \big|\Delta^{\pseudo}(\cdot ,\cdot )_j\big|^4 \Big\|_\infty \bigg) \\
&= {\frac {O((\log k)^3)}{\lambda^4}}  \bigg( \Ex_{\bU} \Big[  \max_{\blue{j\in [k]}} \big|\Delta^{\unif}(\bU)_j\big|^4 \Big] + \Ex_{(\hat{\bU},\hat{\bZ})}\Big[ \max_{\blue{j\in [k]}} \big|\Delta^{\pseudo}(\hat{\bU},\hat{\bZ})_j\big|^4 \Big] + \delta_\CNF (\sqrt{n})^4 \bigg),
\end{align*} 
where we have used Fact~\ref{fact:cor-3.6} and the easy bound $\|\Delta^\pseudo_j\|_\infty \leq \sqrt{n}$ for $\blue{j\in [k]}$ (recalling that each weight vector $W^j$ has $\| W^j\|_2$ equal to 1). Since $r_\bucket \geq 4 \log k$, by the same hypercontractivity-based calculations as in the proof of Claim~4.4 of \cite{HKM12} (starting at the bottom of page 15), each of the two expectations is at most $O((\log k)^2) \cdot h(W,b)$. (We refer the reader to Section 6.2 of \cite{HKM12} for a justification of why the $r_\bucket$-wise independence of the distribution $\hat{\bZ}$ suffices for the analysis of the second expectation.) This concludes the proof of Lemma~\ref{lem:one-bucket-hybrid}.

\section{Proving Theorem~\ref{thm:prg-informal} using Theorem~\ref{thm:our-smooth-prg}} \label{sec:putting-together}

In this section we relate what we have shown so far, a bound on 
\begin{equation} \label{eq:useme}
\Big|\Ex_{\bU\leftarrow \bn}\big[ F_{\psi^\ast_{\lambda,k+1,\blue{(\vec{\theta},0)}}}(\bU)\big] - \Ex_{\bY\leftarrow\Gen} \big[F_{\psi^\ast_{\lambda,k+1,\blue{(\vec{\theta},0)}}}(\bY)\big] \Big|, 
\end{equation} 
to the relevant quantity for Theorem~\ref{thm:prg-informal}, 
\begin{equation} \Big|\Ex_{\bU\leftarrow \bn}\big[ F(\bU)\big] - \Ex_{\bY\leftarrow\Gen} \big[F(\bY)\big] \Big|. \label{eq:final-error} 
\end{equation} 
By~\cite{HKM12}'s Lemma 3.3, the quantity (\ref{eq:final-error}) is at most 
\[
O(1) \cdot ((\ref{eq:useme}) + \Prx\big[ (W^T\bU,\blue{G(\bU)}) \in \strip_{\lambda,k+1,\blue{(\vec{\theta},0)}} \big]).
\]
We bound this probability as follows:
\begin{align*} 
 \Prx\big[ \,(W^T\bU,\blue{G(\bU)}) \in \strip_{\lambda,k+1,\blue{(\vec{\theta},0)}}\, \big] &\le \Prx\big[ \,W^T\bU \in \strip_{\lambda,k,\vec{\theta}} \,\big]  \tag*{(Claim~\ref{claim:psi-extra-coordinate})}  \\
 &\le \Prx\big[ \,W^T\bG \in \strip_{\lambda,k,\vec{\theta}} \,\big] + O((\log k)^{8/5} (\tau \log(1/\tau))^{1/5}) \tag*{(\cite{HKM12}'s invariance principle, Theorem~\ref{thm:HKM-invariance})} \\ 
 &= O(\lambda\sqrt{\log k}) + O((\log k)^{8/5} (\tau \log(1/\tau))^{1/5}). \tag*{(Theorem~\ref{thm:Gaussian-anticoncentration})}
\end{align*} 
Therefore, it follows that 
\begin{align*}
 (\ref{eq:final-error}) &= O\left(
{\frac {(\log k)^3}{\lambda^4}}  \left((\log k)^3 \cdot \tau \log(1/\tau) + {\frac 1 \tau} \cdot \delta_\CNF \cdot n^2\right) + {\frac 1 \tau}  \left(\sqrt{\delta_\CNF} +   \sum_{a=1}^3 n^a\sqrt{\delta_\CNF} \cdot \frac{(\log k)^{a-1}}{\lambda^a}\right)
\right) \\
 & \ \ \ + O(\lambda\sqrt{\log k}) + O((\log k)^{8/5} (\tau \log(1/\tau))^{1/5}).  
 \end{align*} 
 As in \cite{HKM12}, we choose $\lambda = (\log k)^{11/10} (\tau \log(1/\tau))^{1/5}$, which makes 
 \[
 \lambda \sqrt{\log k} = \Theta\left( {\frac {(\log k)^3}{\lambda^4}} \cdot \left((\log k)^3 \cdot \tau \log(1/\tau) \right) \right)= \Theta((\log k)^{8/5} (\tau \log(1/\tau))^{1/5})).\]
Since $k \leq 2^n$ and $\tau \geq 1/\sqrt{n}$, a suitable choice of $\delta_\CNF = 1/\poly(n)$ makes the remaining quantity,
\[
{\frac 1 \tau}  \left( {\frac {(\log k)^3}{\lambda^4}} \cdot \delta_\CNF \cdot n^2 + \sqrt{\delta_\CNF} +   \sum_{a=1}^3 n^a\sqrt{\delta_\CNF} \cdot \frac{(\log k)^{a-1}}{\lambda^a}\right),
\]
at most $O((\log k)^{8/5} (\tau \log(1/\tau))^{1/5})),$ so we get that $ (\ref{eq:final-error})$ is $O((\log k)^{8/5} (\tau \log(1/\tau))^{1/5})$ as desired.  This concludes the proof of Theorem~\ref{thm:prg-informal}.
 \qed

\bibliography{allrefs}{}
\bibliographystyle{alpha}

\end{document}